\newcommand{\nope}[1]{}
\newcommand{\abs}[1]{\left| #1 \right|}
\renewcommand{\epsilon}{\varepsilon}
\newcommand{\id}{\mathbb{I}}
\newcommand{\Lap}{\mathsf{Lap}}
\newcommand{\SD}{\ensuremath{\mathrm{d_{TV}}}}
\newcommand{\PPDE}{\ensuremath{\textsc{PureDPPDE}}}
\newcommand{\DPSGLearner}{\ensuremath{\textsc{DPSGLearner}}}
\newcommand{\trunc}{\mathsf{trunc}}
\newcommand{\tmean}{\mathsf{tmean}}
\newcommand{\Ber}{\ensuremath{\mathrm{Ber}}}
\newcommand{\getsr}{\gets_{\mbox{\tiny R}}}
\newcommand{\sd}[2]{\SD\left( #1 , #2 \right)}
\newcommand{\KL}{\ensuremath{\mathrm{d_{KL}}}}
\newcommand{\CS}{\ensuremath{\mathrm{d}_{\chi^2}}}
\newcommand{\CSD}{\ensuremath{\chi^2}}
\newcommand*{\citet}[1]{\AtNextCite{\AtEachCitekey{\defcounter{maxnames}{2}}}\textcite{#1}}
\newcommand*{\citep}[1]{\citep{#1}}
\title{A Polynomial Time, Pure Differentially Private Estimator\\for Binary Product Distributions
}
\author{
    Vikrant Singhal
    \thanks{\texttt{vikrant.singhal@uwaterloo.ca}. Cheriton School of Computer Science, University of Waterloo. Supported by an NSERC Discovery Grant.}}
\date{}
\begin{document}
\maketitle
\thispagestyle{empty}

\begin{abstract}
    We present the first $\eps$-differentially private, computationally efficient algorithm that estimates the means of product distributions over $\zo^d$ accurately in total-variation distance, whilst attaining the optimal sample complexity to within polylogarithmic factors. The prior work had either solved this problem efficiently and optimally under weaker notions of privacy, or had solved it optimally while having exponential running times.
\end{abstract}




\section{Introduction}

Machine learning and statistics aim to learn information about the population. The pertinent algorithms always involve using random samples from the relevant population to learn and release that information, but at the same time, often end up revealing sensitive information about the individuals in the datasets. \emph{Differential privacy (DP)} \cite{DworkMNS06} is now a \emph{de facto} standard for preserving privacy in learning and testing algorithms. It informally guarantees that no adversary can infer anything more about an individual from the output of a differentially private algorithm, than they could have from its output if the individual were not present in the dataset.

In the last few years, a large body of work on differentially private statistics has emerged, which has shown that the privacy constraint almost always imposes an additional cost in the sample complexity for those tasks (see Section~\ref{sec:intro-related}). Depending on the strength of the privacy guarantee (e.g., pure, concentrated \cite{BunS16, DworkR16}, or approximate differential privacy), the running times of the private algorithms also tend to get affected greatly. It is quite often (but \emph{not} always) the case that developing pure DP algorithms for a statistical task, which has the optimal sample complexity and a polyomial running time, is much more challenging than creating efficient and (sample) optimal, concentrated or approximate DP algorithms for the same task. For instance, mean estimation of heavy-tailed distributions in $\ell_2$ distance under pure DP was solved optimally, but without computational efficiency, in \cite{KamathSU20}, but obtaining a computationally efficient algorithm for the same remained an open problem until it was solved recently in \cite{HopkinsKM22}. On the other hand, \cite{KamathSU20} did provide computationally efficient and optimal, concentrated and approximate DP algorithms for mean estimation of heavy-tailed distributions. Similarly, \cite{KamathLSU19} presented efficient and optimal, concentrated and approximate DP algorithms for mean estimation of multivariate Gaussians, but solving this problem optimally and with computational efficiency under pure DP remained open until \cite{HopkinsKMN23} solved it recently.

Estimating binary product distributions (product distributions over $\zo^d$) in total-variation distance is another such example. In this case, too, optimal and computationally efficient, concentrated and approximate DP algorithms had been presented already in \cite{KamathLSU19}, but no computationally efficient and optimal, pure DP algorithm was known after that, although multiple optimal algorithms for this problem under pure DP, but lacking the computational efficiency, have come up in the recent past (e.g., \cite{BunKSW19}). In this work, we provide an optimal and a computationally efficient algorithm that satisfies pure DP, and close that gap for this problem.

\subsection{Estimation in Total-Variation Distance}
    \label{sec:intro-challenges}

We would first like to remind the reader as to why estimating binary product distributions in total-variation distance is a much more difficult task than just simply estimating them in $\ell_2$ distance. A small error in total-variation distance requires all the marginals to be estimated accurately with respect to the magnitudes of their respective means, that is, the error for each marginal needs to be scaled according to the magnitude of its mean. In certain situations, it could imply that each marginal needs to be estimated to within a small multiplicative error. Instead, if in the estimate, all the coordinates have similar additive errors, then unless that error is \emph{very} small, the marginals with very small means would have significantly lower accuracy than the ones with much larger means. Therefore, the estimate would not be accurate in total-variation distance. On the other hand, if we have the additive errors to be very small for all the coordinates (say, if we are estimating the distribution to within $\ell_1$ distance $\alpha$ or to within $\ell_{\infty}$ distance $\tfrac{\alpha}{d}$), then the cost of estimation would become very large in terms of the sample complexity. Therefore, we need to find a way to estimate product distributions accurately \emph{direction-wise}.

Sans privacy, this is an easy task -- we can just output the empirical mean of the samples. Under privacy constraints, however, this is hard to do without knowing even an approximate scale of the noise to add in each direction. The na\"{i}ve way to estimate privately would just involve either adding a very small noise to each coordinate of the empirical mean, but that would increase the cost in the sample complexity dramatically (by $\poly(d)$). On the other hand, even if we use a sophisticated pure DP estimator, which simply estimates accurately in $\ell_2$ distance $\alpha$, this would not be accurate enough either. Thus, we need more non-trivial ways to privately estimate each coordinate to within an error that is scaled appropriately for that marginal.

\subsection{Result}

We informally state the main result of this work here. It essentially says that our pure DP algorithm is computationally efficient and estimates the mean of any product distribution over $\zo^d$ in total-variation distance using just $\wt{O}(d)$ samples.

\begin{thm}[Informal]\label{thm:ppde-informal}
    For every $\eps,\alpha,\beta > 0$, there exists a
    polynomial-time, $\eps$-DP algorithm that takes $n$ i.i.d.\ samples from a product distribution $P$ over $\zo^d$, and returns a product distribution $Q$, such that if
    $$n \geq \wt{O}\left(
        \frac{d}{\alpha^2} + \frac{d}{\eps\alpha}\right),$$
    where $\wt{O}(\cdot)$ hides all polylogarithmic factors,
    then with probability at least $1-\beta$, the total-variation distance between $P$ and $Q$ is at most $\alpha$.
\end{thm}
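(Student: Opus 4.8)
The plan is to reduce estimation in total-variation distance to an \emph{unweighted} $\ell_2$ estimation problem via a variance-stabilizing reparametrization, and then to solve that problem, scale-by-scale, with an efficient pure-DP mean estimator.

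\emph{Reducing to an unweighted $\ell_2$ problem.} For product distributions over $\zo^d$ with mean vectors $p,q$ one has $\SD(P,Q)^2 \lesssim \sum_i \dH\!\left(\Ber(p_i),\Ber(q_i)\right)^2$, and with the parametrization $\theta_i := \arcsin\sqrt{p_i}$, $\phi_i := \arcsin\sqrt{q_i}$ (so that $\Ber(p_i) = \Ber(\sin^2\theta_i)$) each Hellinger term equals $1-\cos(\theta_i-\phi_i) \le \tfrac12(\theta_i-\phi_i)^2$; hence $\SD(P,Q) \lesssim \norm{\theta-\phi}$. So it is enough to release, under $\eps$-DP, a vector $\widehat\theta \in [0,\tfrac\pi2]^d$ with $\norm{\widehat\theta-\theta} = O(\alpha)$ with probability $\ge 1-\beta$, and output $Q$ with means $\sin^2\widehat\theta_i$. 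This reparametrization is exactly the right one because (i) the non-private plug-in $\arcsin\sqrt{\widehat p_i}$, built from the empirical means $\widehat p_i$, has fluctuation $\wt O(1/\sqrt n)$ \emph{uniformly} in $p_i$ (Anscombe variance stabilization), so the sampling error is $\wt O(\sqrt{d/n}) \le \alpha$ once $n \gtrsim d/\alpha^2$, and (ii) it converts the scale-weighted target $\sum_i (p_i-q_i)^2/(p_i(1-p_i))$ into a plain, coordinate-uniform $\ell_2$ target that a generic private estimator can attack.

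\emph{The algorithm.} The obstruction to releasing $\widehat\theta$ in one shot is that $\theta_i = \arcsin\sqrt{\cdot}$ is non-Lipschitz near $0$ and $1$, so per-coordinate sensitivities are wildly non-uniform. I would handle this in two layers. First, coordinates with $p_i \le \gamma$ or $p_i \ge 1-\gamma$, $\gamma := \wt\Theta(\alpha^2/d)$, contribute only $O(\gamma)$ each to $\sum_i \dH^2$, hence $O(\alpha^2)$ in aggregate, so they may be rounded to the fixed value $\gamma$ (resp.\ $1-\gamma$) at \emph{no} privacy cost; and since $\gamma \gtrsim \log(d/\beta)/n$ in the relevant regime, $\widehat p_i$ already pins down each surviving $p_i$ to within a constant factor, so classifying a coordinate requires only a cheap, stable private step. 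Second, the surviving coordinates are partitioned into $T = \wt O(1)$ dyadic blocks by scale ($p_i \asymp 2^{-k}$, $\theta_i \asymp 2^{-k/2}$); on each block, after recentering to the block's scale, the per-sample contributions are bounded and well-concentrated, and I would feed them to the efficient pure-DP estimation subroutine (\DPSGLearner) with budget $\eps/O(T)$, obtaining $\widehat\theta_{S_k}$ with error $= (\text{sampling}) + (\text{privacy})$; squaring and summing over the $\wt O(1)$ blocks and the rounded coordinates gives $\norm{\widehat\theta-\theta}=O(\alpha)$ whenever $n \ge \wt\Omega(d/\alpha^2 + d/(\eps\alpha))$. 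Privacy is basic composition over the $T+O(1)$ steps; correctness is a union bound over the classification step, the $\wt O(1)$ block estimates, and the empirical-concentration events; polynomial running time follows from that of \DPSGLearner together with $T=\wt O(1)$; and in the sample bound the first term is the non-private cost in stabilized coordinates while the second is the aggregate privacy cost.

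\emph{The main obstacle.} The crux is the block-level estimation, which must be pure-DP, polynomial-time, \emph{and} at the optimal rate simultaneously. Coordinate-wise truncated-Laplace noise is the easy pure-DP route, but it pays the $\ell_1$-sensitivity of the whole vector and loses a $\sqrt d$ factor; the optimal rate is the one delivered by the exponential mechanism (equivalently, private hypothesis selection over a cover), which is precisely why the prior sample-optimal algorithms run in exponential time. So the heart of the proof is to reduce, within each scale block, to a mean-estimation instance whose dataset is \emph{certifiably} well-concentrated --- bounded per-sample influence after recentering, plus sub-Gaussian/robust fluctuations together with an efficiently verifiable witness --- so that an efficient relaxation of the exponential mechanism, in the spirit of \cite{HopkinsKM22,HopkinsKMN23}, can be both implemented in polynomial time and \emph{proved} accurate on it. A secondary but genuine difficulty is the cross-scale accounting: turning per-block $\ell_2$ guarantees, over blocks of vastly different magnitude, back into a single $\alpha$-bound in total-variation distance --- which is exactly where the variance-stabilizing transform earns its keep, since it equalizes the noise scale across coordinates and collapses the scale-weighted objective into an ordinary $\ell_2$ one.
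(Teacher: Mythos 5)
Your proposal captures the paper's high-level structure: partition the coordinates by scale, rescale each scale-class so the marginals become comparable, and feed the rescaled coordinates to the efficient pure-DP sub-Gaussian learner. The arcsin/Hellinger reparametrization is essentially equivalent to the paper's $\chi^2$-based accounting, since $(\theta_j-\phi_j)^2 \approx (p_j-q_j)^2/(p_j(1-p_j))$, and both convert the scale-weighted total-variation objective into an ordinary $\ell_2$ one. You also correctly observe that coordinates with $p_i \lesssim \alpha^2/d$ contribute only $O(\alpha^2)$ in aggregate.

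The genuine gap is the private classification that you wave through. You assert that rounding small coordinates is done ``at no privacy cost'' and that assigning each surviving coordinate to a dyadic scale block is ``a cheap, stable private step,'' but this is exactly where the sample complexity is decided, and the naive version fails. To decide privately whether $p_i \lesssim \gamma \approx \alpha^2/d$, you need a noisy estimate of every coordinate at resolution $\gamma$. With no a~priori upper bound on the marginals, the truncated empirical mean of an $m$-row dataset has $\ell_1$-sensitivity $\approx B/m$ for truncation radius $B$, and the only safe choice is $B \approx d$, giving per-coordinate Laplace noise $\approx d/(\eps m)$; requiring $d/(\eps m) \ll \alpha^2/d$ forces $m \gg d^2/(\eps\alpha^2)$, a full factor of $d$ too many samples. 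The paper's iterative partitioning closes exactly this gap: in round $r$, the surviving coordinates are known (inductively) to satisfy $p_j \le u_r$, so the truncation radius and hence the noise shrink to $\approx u_r|S_r|/m$, while the threshold $\tau_r \approx u_r$ shrinks at the same rate, keeping signal-to-noise bounded round after round. Once $u_r|S_r|<1$, the loop exits and a single low-sensitivity ($\approx \log(m_1/\beta)/m_1$) Laplace step handles all remaining light coordinates at once --- which is also what makes your ``round to $\gamma$'' step unnecessary. A further miscalibration: you place the crux in manufacturing a ``certifiably well-concentrated instance'' for an efficient exponential-mechanism relaxation, but \DPSGLearner{} is used here as a genuine black box (the only thing to verify is $\Sigma \preceq \id$ after rescaling, which is immediate); the actual novelty is the private, progressively refined partitioning that makes the rescaling legal under pure DP without an accumulating $\poly(d)$ privacy cost.
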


Note that this is the optimal sample complexity for this problem, and the upper bounds under approximate DP and the matching lower bounds (which also hold under pure DP) were proved in \cite{KamathLSU19}. The first term in the sample complexity is the necessary term for attaining the desired accuracy without any privacy constraints, and the second term is the additive cost due to privacy. For $\eps \geq \Omega(\alpha)$, privacy comes for ``free'', that is, there is only a small multiplicative cost in the non-private sample complexity. Additionally, as we will see in the formal version of the above theorem (see Theorem~\ref{thm:ppde}), there is also a multiplicative $\polylog(d,\tfrac{1}{\beta},\tfrac{1}{\eps})$ improvement in the sample complexity over that of \cite{KamathLSU19} due to our tighter analysis.

One more comparison that we would like to draw is with the work of \cite{BunKSW19}. It is true that the polylog factors in their results are better than ours, but their algorithm only provides guarantees for estimation within total-variation distance, whereas our work does that for parameter estimation within $\chi^2$-distance (hence, KL-divergence, as well) for a very wide range of parameters. In that sense, our algorithm provides stronger results because the total-variation distance guarantee is implied from the above.

\subsection{Overview of Techniques}\label{sec:intro-techniques}

Our techniques are similar to those in \cite{KamathLSU19} for estimating binary product distributions -- private partitioning, followed by privately estimating.

The private partitioning is performed iteratively. In each round, we assume an upper bound on the marginals, and use the Laplace mechanism (scaled according to that upper bound and the number of marginals that remain to be partitioned) to get a rough private estimate of each marginal, and pick the ones that lie above a certain threshold. As we prove later, those coordinates are bound to have higher means than the ones with their noisy estimates below that threshold. We rescale those chosen heavier coordinates as per the assumed upper bound for that iteration, and mark them to be estimated later after the partitioning is complete. In the next round, we assume both a reduced upper bound on the marginals and a reduced threshold to filter out the next set of marginals, and repeat the process.

Once we have filtered and rescaled those heavier marginals, we estimate them in $\poly(n,d)$ time to within $\alpha$ in $\ell_2$ distance under pure DP using the sub-Gaussian learner from \cite{HopkinsKMN23}. On inverse-rescaling the estimated marginals according to their respective original rescaling parameters, we get an accurate estimate for those heavier coordinates to within $O(\alpha)$ in total-variation distance. Note that we cannot simply invoke the estimator from \cite{HopkinsKMN23} on the filtered out coordinates before rescaling them, since that would just give an $\ell_2$ estimate of the original marginals, which would not be accurate direction-wise, hence, would not be accurate in total-variation distance, especially when there are marginals with very different magnitudes in that filtered set (see Section~\ref{sec:intro-challenges}). Therefore, this combination of partitioning and rescaling those heavier marginals seems necessary for this kind of an approach.

\begin{rem}
    We would like to remark that in \cite{HopkinsKMN23}, the assumption is that the covariance of the distribution in question is $\Sigma = \id$, while what we require is that $\Sigma \preceq \id$. That said, their algorithm can still work with the latter assumption because their proof for mean estimation mainly relies on their Corollary~5.4 and Lemma~B.1, which still hold under this relaxed assumption.
\end{rem}

There are two important aspects of our algorithm that make it different from the work by \cite{KamathLSU19}.
\begin{itemize}
    \item In the partitioning procedure, we filter out the ``heavier'' marginals iteratively, and group the ones with similar weights together, but unlike the algorithm in \cite{KamathLSU19}, ours does not estimate them simultaneously while the partitioning is being performed. This is because we wanted to avoid the additional $\poly(d)$ cost in the sample complexity due to (basic) composition under pure DP. However, \cite{KamathLSU19} were able to both partition and estimate those heavy marginals at the same time because they were working under concentrated or approximate DP, and they could use the more sophisticated, advanced composition of privacy \cite{DworkRV10}, which only provides concentrated or approximate DP guarantees.
    \item In order to estimate those heavier marginals after filtering and grouping them, we rescale them using their respective rough private estimates that we obtained while partitioning, and apply the pure DP sub-Gaussian mean estimator from \cite{HopkinsKMN23}, which is computationally efficient, as well. This gives us an estimate of the scaled marginals that is accurate to within $\alpha$ in $\ell_2$ distance.
\end{itemize}

After the partitioning rounds, we have the final round, where we are just left with the lighter coordinates to estimate. For that, we simply use the Laplace mechanism again, but with a much lower sensitivity this time, and get an accurate estimate for those marginals directly in one shot.

With accurate estimates for both the heavy and the light marginals in hand, we finally combine the two via a simple concatenation, and this gives us an accurate estimate for the whole distribution, as we had originally desired.

\begin{rem}
    We also want to remark on the techniques of \cite{BunKSW19} for comparison. The algorithm of \cite{BunKSW19} involves an application of the \emph{exponential mechanism} \cite{McSherryT07}, but it is more than just that. It is also a very general-purpose algorithm. While simpler applications of the exponential mechanism could be made computationally efficient, doing that for the algorithm in \cite{BunKSW19} for our case does not seem straightforward. Intuitively, in their work, the score function of a point in the output space is also based on its tournaments with all the other points in the output space (as opposed to just with respect to the dataset), which means that sampling a point from that space via the exponential mechanism becomes inefficient as there is no way to efficiently determine the score of that point itself.
\end{rem}

\subsection{Motivation and Broader Impact}

We would like to remark that in this recent line of work on computationally efficient, pure DP statistical estimation, our work deals with a very fundamental distribution under a tricky, direction-wise error metric (total-variation distance). Therefore, while it closes a long-standing open problem, it also contributes to the diverse body of algorithmic tools and ideas available in DP literature for statistical estimation tasks. Could the techniques in our work be directly applied to other distributions? The answer is not obvious because different families of distributions have different properties and the way distance metrics are characterised for them could be very different. However, the high-level idea of performing private preconditioning can be and has been certainly useful in DP statistical estimation tasks. Also, we believe that it might be possible to estimate certain families of distributions (say, a subset of those with finite domains) in metrics that are similar to $\CSD$-divergence using ideas from our algorithm, just like we do in this work.

Additionally, our work also shows that even though heavy and general-purpose machinery, such as \cite{HopkinsKMN23}, might be available at our disposal, many tasks (like the problem we address) may still not have simple solutions. As we point out in Section~\ref{sec:intro-techniques}, a lot of non-trivial steps often need to be taken in order to effectively use these tools.

Computationally efficient statistical estimation under pure DP has been a topic of recent interest in the DP community, mostly because many pure DP algorithms, even though they may have optimal sample complexity, are not very practical as they tend to have exponential running times. Pure DP gives us much stronger privacy guarantees, and if we have as practical and sample-efficient algorithms as the ones under the weaker privacy notions (such as approximate DP) to solve problems under this regime, then we have the best of both worlds. Hence, our goal was also to fill in another gap in this literature and provide new algorithms in this broader line of work for the community interested in DP statistical estimation.

Estimating binary product distributions is a fundamental statistical problem, so we answered it in the contexts (1) of computationally efficient pure DP estimation and (2) of addressing a folklore statistical problem under DP constraints. Therefore, we believe that we addressed this question from both theoretical and practical perspectives.

\subsection{Related Work}\label{sec:intro-related}

Besides a large selection of folklore work in non-private estimation of distributions, there has been a lot of work in recent years on differentially private statistical estimation.
Mean estimation is possibly the most fundamental question in this space, enjoying significant attention (e.g.,~\cite{BarberD14, DuchiJW13, KarwaV18, BunS19, KamathLSU19, KamathSU20, WangXDX20, DuFMBG20, BiswasDKU20, CaiWZ21, BrownGSUZ21, HuangLY21, LiuKKO21, LiuKO22, KamathLZ22, HopkinsKM22, KothariMV22, TsfadiaCKMS22, DuchiHK23, CovingtonHHK21, NikolovT23, KamathMRSSU23}).
Other related problems include private covariance or density estimation~\cite{BunNSV15, BunKSW19,AdenAliAK21,KamathMSSU22, AshtianiL22, AlabiKTVZ22,HopkinsKMN23, KothariMV22, TsfadiaCKMS22, LiuKO22, BiswasDKU20}.
Beyond these settings, other works have examined statistical estimation under differential privacy constraints for mixtures of Gaussians~\cite{KamathSSU19, AdenAliAL21,ChenCDEIST23}, graphical models~\cite{ZhangKKW20}, discrete distributions~\cite{DiakonikolasHS15}, median estimation~\cite{AvellaMedinaB19, TzamosVZ20,RamsayC21,RamsayJC22,BenEliezerMZ22, CummingsD20}, and more.
Several recent works have explored the connections between privacy and robustness \cite{LiuKKO21, HopkinsKM22, GeorgievH22, LiuKO22, KothariMV22, AlabiKTVZ22, HopkinsKMN23,ChenCDEIST23}, and between privacy and generalization \cite{HardtU14,DworkFHPRR15,SteinkeU15,BassilyNSSSU16,RogersRST16,FeldmanS17}.
Upcoming directions of interest include ensuring privacy when one individual may contribute multiple data points \cite{LiuSYKR20, LevySAKKMS21, GeorgeRST22} (or what is known as, \emph{user-level differential privacy}), a combination of local and central DP for different users~\cite{AventDK20}, and estimation with access to trace amounts of public data~\cite{BieKS22}.
We refer the reader to \cite{KamathU20} for more coverage of the recent work on differentially private statistical estimation.
Differentially private statistical inference also has been an active area of research for over a decade (e.g., \cite{DworkL09, VuS09, WassermanZ10, Smith11}), but the literature is too broad to fully summarize here.

Another broader line of work includes those on the minimax sample complexities for various differentially private statistical estimation tasks.
The first minimax sample complexity bounds to show an asymptotic separation between private and non-private estimation for private mean estimation were proved in \cite{BunUV14}, and subsequently sharpened and generalized in several ways \cite{DworkSSUV15,BunSU17,SteinkeU17a,SteinkeU17b,KamathLSU19}. Recently, \cite{CaiWZ21} extended these bounds to sparse estimation and regression problems. \cite{AcharyaSZ21} provides an alternative, user-friendly approach to proving sample complexity bounds, which is directly analogous to the classical approaches in statistics and learning theory for proving minimax lower bounds.
\cite{KamathMS22} also provides a generalized version of the well-known ``fingerprinting technique'' that is used to prove hardness results for these kinds of problems under approximate DP.

The two prior works most relevant to ours among the above are those by \cite{KamathLSU19, HopkinsKMN23} on learning binary product distributions under concentrated and approximate DP, and estimating means of sub-Gaussian distributions under pure DP, respectively. The question of learning binary product distributions optimally in polynomial time under pure DP has stayed open for a while now. In our work, we modify and use a combination of their techniques to solve this problem.

\section{Preliminaries}

For the utility analysis, we mostly rely on the concentration properties of Bernoulli distributions and the Laplace distribution, and on the relationships among different distance metrics for probability distributions. The privacy analysis is much simpler, and we use the privacy guarantees of the existing differentially private mechanisms. We describe all these notions and results in this section.

\subsection{Statistics Preliminaries}

Here, we state the essential definitions and results from
statistics that would be used throughout the draft.
They include descriptions of various metrics for
probability distributions, and a few useful concentration
inequalities.

\paragraph{Notations.} Let $P_1,\dots,P_k$ be distributions over domains $\cX_1,\dots,\cX_k$, respectively. Then we say that $P=P_1\otimes\dots\otimes P_k$ is a product distribution over $\cX_1\otimes\dots\otimes\cX_k$. Next, for a distribution $P$ over domain $\cX$, we use $P^{\otimes k} = P\otimes\dots\otimes P$ ($k$ times) to denote the product distribution over $\cX\otimes\dots\otimes\cX$ ($k$ times), where each marginal is $P$.
Also, for $0\leq p \leq1$, we use $\Ber(p)$ to denote a Bernoulli random variable with mean $p$.
Finally, for any $v=(v_1,\dots,v_d) \in \R^d$, we use $\abs{v}$ to denote its $\ell_1$ norm, i.e., $\abs{v} = \sum\limits_{i=1}^{d}{\abs{v_i}}$.

\subsubsection{Distances Between Distributions}

We use several notions of distance metrics between distributions.  
\begin{defn}
If $P,Q$ are distributions, then,
\begin{itemize}
\item the \emph{statistical distance} or the \emph{total-variation distance} is $\SD(P,Q) = \frac12 \sum_{x} | P(x) - Q(x) |$, 
\item the \emph{$\CSD$-divergence} is $\CS(P \| Q) = \sum_{x} \frac{(P(x) - Q(x))^2}{Q(x)}$, and
\item the \emph{KL-divergence} is $\KL(P \| Q) = \sum_{x} P(x) \log \frac{P(x)}{Q(x)}$.
\end{itemize}
\end{defn}

\noindent Next, we have the following bound on the $\CSD$-divergence
between two Bernoulli distributions.
\begin{lem}
    [$\CSD$-Divergence between Bernoulli Distributions]
    \label{lem:chi-ub}
    For Bernoulli distributions $P$ and $Q$ over $\zo$
    (with means $p$ and $q$, respectively), such that
    $\abs{p-q} \leq \tfrac{1}{4}$ and
    $p \leq \tfrac{1}{2}$,
    $$\dcs(P,Q) \leq \frac{4(p-q)^2}{q}.$$
\end{lem}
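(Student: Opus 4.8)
The plan is to compute $\cs{P}{Q}$ in closed form and then use the two hypotheses to control the single ``bad'' denominator that appears. Since $P$ and $Q$ are supported on $\zo$, the definition of the $\CSD$-divergence is just a two-term sum, over $x=0$ and $x=1$. Writing $P(1)=p,\ P(0)=1-p,\ Q(1)=q,\ Q(0)=1-q$, we have $P(1)-Q(1)=p-q$ and $P(0)-Q(0)=q-p$, so both numerators equal $(p-q)^2$ and
\begin{equation*}
  \cs{P}{Q} \;=\; \frac{(p-q)^2}{q} + \frac{(p-q)^2}{1-q} \;=\; \frac{(p-q)^2}{q(1-q)}.
\end{equation*}
(Here one implicitly takes $q\in(0,1)$; if $q\in\{0,1\}$ the divergence is either trivially handled or infinite, so there is nothing to prove.)

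Given this identity, the lemma reduces to the single inequality $\tfrac{1}{1-q}\le 4$, equivalently $q\le\tfrac34$, equivalently $q(1-q)\ge q/4$. This is exactly where the hypotheses enter: from $p\le\tfrac12$ and $\abs{p-q}\le\tfrac14$ we get $q\le p+\abs{p-q}\le\tfrac12+\tfrac14=\tfrac34$, hence $1-q\ge\tfrac14$. Substituting into the closed form yields $\cs{P}{Q}=\frac{(p-q)^2}{q(1-q)}\le\frac{(p-q)^2}{q/4}=\frac{4(p-q)^2}{q}$, which is the claimed bound.

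There is essentially no obstacle here — the statement is a direct calculation. The only point to be careful about is the accounting of the assumptions: $\abs{p-q}\le\tfrac14$ together with $p\le\tfrac12$ is used \emph{only} to guarantee $q\le\tfrac34$ (and thus $1-q\ge\tfrac14$), and the constant $4$ is tight, being attained in the limit $q\to\tfrac34$. I expect the write-up to be a few lines: expand the divergence, read off the factor $\tfrac{1}{q(1-q)}$, and bound $\tfrac{1}{1-q}\le 4$.
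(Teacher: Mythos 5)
Your proof is correct and is essentially the same calculation the paper relies on (the paper just cites Claim~5.12 of \cite{KamathLSU19}, which is exactly this two-term expansion of $\cs{P}{Q}$, the simplification to $\frac{(p-q)^2}{q(1-q)}$, and the bound $1-q\ge\tfrac14$ obtained from $q\le p+\abs{p-q}\le\tfrac34$). Nothing is missing; your note that the constant $4$ is tight at $q=\tfrac34$ is also accurate.
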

\begin{proof}
    The proof is identical to that of Claim~5.12 from \cite{KamathLSU19}.
\end{proof}

\noindent For product distributions $P = P_1 \otimes \dots \otimes P_k$ and $Q = Q_1 \otimes \dots \otimes Q_k$, the KL-divergence is additive, and the total-variation distance and the $\CSD$-divergence are sub-additive. In particular, we have the following.
\begin{lem}
    [Sub-Additivity under Product Distributions]
    \label{lem:sd-ub}
    Let $P = P_1 \otimes \dots \otimes P_k$ and
    $Q = Q_1 \otimes \dots \otimes Q_k$ be two
    product distributions. Then,
    \begin{itemize}
    \item $\sd{P}{Q} \leq \sum_{j=1}^{d} \SD(P_j,Q_j),$
    \item $\CS(P \| Q) \leq \sum_{j=1}^{d} \CS(P_j \| Q_j)$, and
    \item $\KL(P \| Q) = \sum_{j=1}^{d} \KL(P_j \| Q_j).$
    \end{itemize}
\end{lem}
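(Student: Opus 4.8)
The plan is to establish the three parts separately; each follows from a short computation that exploits the product structure, which I would carry out in increasing order of delicacy. For the KL-divergence I would write a point of the product domain as $x=(x_1,\dots,x_k)$, so that $P(x)=\prod_j P_j(x_j)$, $Q(x)=\prod_j Q_j(x_j)$, and hence $\log\frac{P(x)}{Q(x)}=\sum_j\log\frac{P_j(x_j)}{Q_j(x_j)}$. Substituting into the definition of $\kl{P}{Q}$ and interchanging the two finite sums, the $j$-th term becomes $\big(\sum_{x_j}P_j(x_j)\log\frac{P_j(x_j)}{Q_j(x_j)}\big)\prod_{i\neq j}\big(\sum_{x_i}P_i(x_i)\big)$, and since each $\sum_{x_i}P_i(x_i)=1$ this collapses to $\sum_j\kl{P_j}{Q_j}$, giving the claimed equality. (For continuous domains the same manipulation goes through with integrals and Fubini.)

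For the total-variation distance I would run the standard hybrid (telescoping) argument: set $R_0=P$, $R_k=Q$, and for $0\le\ell\le k$ let $R_\ell=Q_1\otimes\cdots\otimes Q_\ell\otimes P_{\ell+1}\otimes\cdots\otimes P_k$. Since $\SD$ is a metric, the triangle inequality gives $\sd{P}{Q}\le\sum_{\ell=1}^k\SD(R_{\ell-1},R_\ell)$. The hybrids $R_{\ell-1}$ and $R_\ell$ agree in every coordinate except the $\ell$-th, where they carry $P_\ell$ and $Q_\ell$ respectively; expanding $\SD(R_{\ell-1},R_\ell)=\tfrac12\sum_x|R_{\ell-1}(x)-R_\ell(x)|$ and factoring out the identical coordinates (whose masses each sum to $1$) shows $\SD(R_{\ell-1},R_\ell)=\SD(P_\ell,Q_\ell)$, and summing over $\ell$ proves the bound.

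For the $\chi^2$-divergence I would first record the multiplicative identity $1+\cs{P}{Q}=\prod_{j=1}^k\big(1+\cs{P_j}{Q_j}\big)$, which comes from writing $1+\cs{P}{Q}=\EE_{x\sim Q}\big[(P(x)/Q(x))^2\big]$ and using that under $Q$ the coordinates are independent, so the expectation of the product of the squared per-coordinate likelihood ratios factorizes. Converting this product into the sum on the right-hand side is the only step that is not a pure identity, and I expect it to be the main obstacle: since $\prod_j(1+a_j)\ge 1+\sum_j a_j$ for $a_j\ge 0$, the stated additive bound holds only up to lower-order cross terms, so one must apply it in the regime where the per-coordinate divergences are small — as is the case in the applications of this lemma in the paper — via $\prod_j(1+a_j)\le\exp(\sum_j a_j)$ together with a first-order bound on $e^t-1$; alternatively, as with Lemma~\ref{lem:chi-ub}, this part can be deferred to the corresponding claim in \cite{KamathLSU19}.
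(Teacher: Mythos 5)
Your arguments for the KL-divergence (factoring the log-ratio and integrating out the other coordinates) and for total-variation distance (hybrid argument plus the identity $\SD(A\otimes B, A\otimes C)=\SD(B,C)$) are both correct and are the standard proofs. The real content of your review is in the third bullet, where you have caught a genuine problem, and I want to be sharper than you were about it: the second bullet of the lemma is \emph{false} as stated, not merely ``true up to cross terms.'' From the tensorization identity you correctly write down, $1+\CS(P\|Q)=\prod_{j=1}^{k}\bigl(1+\CS(P_j\|Q_j)\bigr)$, and since each $\CS(P_j\|Q_j)\geq 0$, expanding the product gives $\prod_j(1+a_j)\geq 1+\sum_j a_j$ and hence $\CS(P\|Q)\geq\sum_j\CS(P_j\|Q_j)$ — the $\chi^2$-divergence is \emph{super}-additive over products, which is the opposite inequality. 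A concrete counterexample: take $k=2$, $P_1=P_2=\Ber(1/2)$, $Q_1=Q_2=\Ber(1/4)$, so that $\CS(P_j\|Q_j)=\tfrac13$ and $\CS(P\|Q)=(\tfrac{4}{3})^2-1=\tfrac{7}{9}>\tfrac{2}{3}$.

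Your proposed patch (pass through $\prod_j(1+a_j)\leq\exp(\sum_j a_j)$ and then bound $e^t-1$ linearly in the small-$t$ regime) does work and costs a multiplicative constant such as $e-1$ when $\sum_j\CS(P_j\|Q_j)\leq 1$, but there is a cleaner repair that exactly matches how the lemma is consumed downstream in the accuracy analysis. What the paper actually needs is the chain
$$
2\,\SD(P,Q)^2 \;\leq\; \KL(P\|Q) \;=\; \sum_{j=1}^{k}\KL(P_j\|Q_j) \;\leq\; \sum_{j=1}^{k}\CS(P_j\|Q_j),
$$
where the first step is Pinsker, the equality is the (correct) third bullet, and the last step is the \emph{per-coordinate} inequality $\KL\leq\CS$ from Lemma~\ref{lem:pinsker}. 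In other words, the quantity one should upper-bound additively is $\KL(P\|Q)$, not $\CS(P\|Q)$, and no sub-additivity of $\chi^2$ is ever required. I would suggest replacing the second bullet with $\KL(P\|Q)\leq\sum_{j}\CS(P_j\|Q_j)$, which is both true and sufficient for every use of the lemma in the paper.
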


\noindent The three metrics are related to each other in a clean way as follows.
\begin{lem}[Pinsker's Inequality]\label{lem:pinsker}
For any two distributions $P$ and $Q$, we have,
$$2 \cdot \SD(P,Q)^2 \leq \KL(P \| Q) \leq \CS(P \| Q).$$
\end{lem}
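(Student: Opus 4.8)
The plan is to prove the two inequalities separately, dealing with the right-hand bound $\KL(P\|Q)\le\CS(P\|Q)$ first since it is the shorter of the two. Expanding the square shows $1+\CS(P\|Q)=\sum_x \tfrac{P(x)^2}{Q(x)}=\EE_{x\sim P}\!\left[\tfrac{P(x)}{Q(x)}\right]$. Because $\log$ is concave, Jensen's inequality gives $\KL(P\|Q)=\EE_{x\sim P}\!\left[\log\tfrac{P(x)}{Q(x)}\right]\le\log\EE_{x\sim P}\!\left[\tfrac{P(x)}{Q(x)}\right]=\log\!\left(1+\CS(P\|Q)\right)$, and the elementary bound $\log(1+t)\le t$ closes this direction.

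For the left-hand bound (the classical Pinsker inequality), I would first reduce to the two-point (Bernoulli) case and then settle that case by calculus. For the reduction, fix $A=\set{x:P(x)\ge Q(x)}$, so that $\SD(P,Q)=P(A)-Q(A)$. Applying the log-sum inequality (equivalently, convexity of $t\mapsto t\log t$) to the sums over $A$ and over $A^c$ separately yields $\KL(P\|Q)\ge P(A)\log\tfrac{P(A)}{Q(A)}+(1-P(A))\log\tfrac{1-P(A)}{1-Q(A)}=\kl{\Ber(P(A))}{\Ber(Q(A))}$; that is, coarsening to the partition $\{A,A^c\}$ can only decrease the KL-divergence (the data-processing inequality). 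It therefore suffices to prove the scalar inequality $\kl{\Ber(p)}{\Ber(q)}\ge 2(p-q)^2$ for all $p,q\in[0,1]$.

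To establish this scalar inequality, fix $q$ and set $g(p)=p\log\tfrac{p}{q}+(1-p)\log\tfrac{1-p}{1-q}-2(p-q)^2$. A direct computation gives $g(q)=0$, $g'(q)=0$, and $g''(p)=\tfrac{1}{p(1-p)}-4\ge 0$ since $p(1-p)\le\tfrac14$ on $[0,1]$. Hence $g$ is convex with a stationary point at $p=q$, so $g(p)\ge g(q)=0$ for all $p\in[0,1]$, which is exactly the claim. Combining this with the reduction above gives $\KL(P\|Q)\ge 2(P(A)-Q(A))^2=2\,\SD(P,Q)^2$, completing the proof.

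I expect the main point to be the reduction step rather than the scalar computation: one must invoke the log-sum inequality to justify that passing to the binary partition $\{A,A^c\}$ does not increase the divergence, and observe that the maximizing set for the total-variation distance is precisely $\set{x:P(x)\ge Q(x)}$. The scalar Bernoulli case, and the $\KL\le\CS$ direction, are then short one-line arguments. (Everything here is textbook, so the ``obstacle'' is really just bookkeeping; alternatively one could cite a standard reference for Pinsker's inequality and only spell out the $\KL\le\CS$ step.)
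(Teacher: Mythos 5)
Your proof is correct. Note, however, that the paper does not actually prove Lemma~\ref{lem:pinsker}; it is stated without proof as a standard fact (the left inequality is the classical Pinsker inequality, the right one the usual $\KL\le\CS$ comparison, both textbook results). Your argument is the standard one for each half: Jensen on the concave $\log$ plus $\log(1+t)\le t$ for $\KL\le\CS$, and the data-processing / log-sum reduction to the Bernoulli case followed by the second-derivative computation for $2\,\SD^2\le\KL$. All the individual steps check out --- in particular, $1+\CS(P\|Q)=\sum_x P(x)^2/Q(x)$ expands correctly, $\SD(P,Q)=P(A)-Q(A)$ for $A=\{x:P(x)\ge Q(x)\}$, the coarsening to $\{A,A^c\}$ only decreases $\KL$ by the log-sum inequality, and $g''(p)=\tfrac{1}{p(1-p)}-4\ge 0$ with $g(q)=g'(q)=0$ gives the scalar bound. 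So there is nothing to compare against: you have supplied a correct, self-contained proof where the paper chose to cite the result, which is a perfectly reasonable thing to do (and your closing remark that one could simply reference a standard source for Pinsker and only spell out $\KL\le\CS$ is exactly what the paper implicitly does).
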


\subsubsection{Tail Bounds}

We use a few tail bounds for sums of independent Bernoulli random variables.
The first lemma is an additive form of the Chernoff bound.
\begin{lem}[Bernstein's Inequality]\label{lem:chernoff-add}
    For every $p > 0$, if $X_1,\dots,X_m$ are i.i.d.\ samples from $\Ber(p)$,
    then for every $\eps > 0$,
    \begin{equation*}
        \pr{}{\frac{1}{m}\sum\limits_{i=1}^{m}{X_i} \geq
        	p + \eps} \leq e^{-\KL(p+\eps||p)\cdot m}
        ~~~\textrm{and}~~~\pr{}{\frac{1}{m}\sum\limits_{i=1}^{m}{X_i} \leq
            p - \eps} \leq e^{-\KL(p-\eps||p)\cdot m}
    \end{equation*}
\end{lem}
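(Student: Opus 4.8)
This is the classical Chernoff--Cram\'er bound for Bernoulli sums, so the plan is to run the exponential-moment method and then identify the optimised exponent with a KL-divergence. I would first establish the upper-tail inequality and then deduce the lower tail from it by passing to complementary indicators. Write $S_m = \sum_{i=1}^{m} X_i$ and $q = p+\eps$, and note that we may assume $q \le 1$, since otherwise the event $\{S_m/m \ge q\}$ is empty and there is nothing to prove.

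For the upper tail, the first move is a Chernoff transform: for any $t>0$, Markov's inequality on the non-negative variable $e^{tS_m}$, together with independence and the identity $\EE[e^{tX_1}] = 1-p+pe^{t}$ for $X_1\sim\Ber(p)$, gives
\[
\pr{}{\tfrac{1}{m} S_m \ge q} \;\le\; e^{-tmq}\left(\EE\!\left[e^{tX_1}\right]\right)^{m} \;=\; \left(\left(1-p+pe^{t}\right)e^{-tq}\right)^{m}.
\]
Then I would optimise the exponent $g(t) = \log\!\left(1-p+pe^{t}\right) - tq$ over $t>0$. The stationarity condition $g'(t)=0$ simplifies to $\tfrac{pe^{t}}{1-p+pe^{t}} = q$, whose solution $e^{t} = \tfrac{q(1-p)}{p(1-q)}$ lies in $(1,\infty)$ exactly because $q>p$, so the critical point is admissible; substituting it back one finds $1-p+pe^{t} = \tfrac{1-p}{1-q}$, and a short algebraic simplification gives
\[
g(t) \;=\; \log\frac{1-p}{1-q} - q\log\frac{q(1-p)}{p(1-q)} \;=\; -\,q\log\frac{q}{p} - (1-q)\log\frac{1-q}{1-p} \;=\; -\,\kl{q}{p}.
\]
Combined with the Chernoff transform this yields $\pr{}{\tfrac1m S_m \ge p+\eps} \le e^{-m\,\kl{p+\eps}{p}}$.

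For the lower tail, I would set $Y_i = 1-X_i \sim \Ber(1-p)$ and observe that $\{\tfrac1m\sum X_i \le p-\eps\}$ is the same event as $\{\tfrac1m\sum Y_i \ge (1-p)+\eps\}$ (we may assume $p-\eps \ge 0$, else the event is empty). Applying the upper-tail bound just proved, with $p$ replaced by $1-p$, bounds this by $e^{-m\,\kl{(1-p)+\eps}{1-p}}$, and the elementary identity $\kl{1-a}{1-b} = \kl{a}{b}$ for Bernoulli parameters (with $a=p-\eps$, $b=p$) rewrites the exponent as $\kl{p-\eps}{p}$, completing the proof. The only genuinely non-mechanical step is the algebraic collapse in the second display, where the optimised exponent must be recognised as the symmetric KL expression; everything else is standard moment-method bookkeeping, and in fact the lemma could simply be cited to any standard reference.
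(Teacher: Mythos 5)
Your proof is correct, and it is the standard Chernoff--Cram\'er argument: exponentiate, apply Markov, tensorize by independence, optimise the tilt $t$, and recognise the optimised exponent as the binary KL divergence. The algebraic reduction of $g(t)$ to $-\kl{p+\eps}{p}$ and the symmetry $\kl{1-a}{1-b}=\kl{a}{b}$ used to transfer the upper tail to the lower tail are both verified correctly. For what it is worth, the paper itself does not prove this lemma at all --- it is stated in the preliminaries as a known tail bound (the paper labels it ``Bernstein's Inequality,'' though it is really the Chernoff--Hoeffding bound in relative-entropy form), so your proof supplies an argument where the paper simply cites folklore, which is consistent with your own closing remark that the result could be cited to any standard reference.

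One small point worth tidying: you dismiss the boundary case $q=p+\eps=1$ as giving an empty event, but $\{S_m/m \ge 1\}=\{S_m=m\}$ is not empty; it has probability $p^m$, which equals $e^{-m\,\kl{1}{p}}$ exactly. Your stationary-point computation requires $q<1$ (the optimal $t$ diverges as $q\to 1$), so the $q=1$ case should be handled by taking a limit in $t$ or by direct computation rather than being discarded. The same applies at $p-\eps=0$ for the lower tail. These are routine limiting cases and do not affect the substance of the argument.
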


\noindent The second lemma is an multiplicative form of the Chernoff bound.
\begin{lem}[Multiplicative Chernoff Bound]\label{lem:chernoff-mult}
    For every $p > 0$, if $X_1,\dots,X_m$ are i.i.d.\ samples from $\Ber(p)$,
    then for every $\delta \geq 0$,
    \begin{equation*}
        \pr{}{\sum\limits_{i=1}^{m}{X_i} \geq
        	(1+\delta)pm} \leq e^{-\frac{\delta^2 pm}{2+\delta}}
    \end{equation*}
\end{lem}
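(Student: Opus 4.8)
The plan is to use the standard exponential-moment (Chernoff) method. Write $S = \sum_{i=1}^{m} X_i$, and assume $\delta > 0$ (the case $\delta = 0$ is trivial since the right-hand side is $1$). For any $t > 0$, apply Markov's inequality to the nonnegative random variable $e^{tS}$:
\[
\pr{}{S \geq (1+\delta)pm} = \pr{}{e^{tS} \geq e^{t(1+\delta)pm}} \leq e^{-t(1+\delta)pm}\,\EE\!\left[e^{tS}\right].
\]
Since the $X_i$ are i.i.d.\ $\Ber(p)$, the moment generating function factorizes as $\EE[e^{tS}] = \left(\EE[e^{tX_1}]\right)^m = \left(1 - p + p e^{t}\right)^m = \left(1 + p(e^{t}-1)\right)^m$, and using $1 + x \leq e^{x}$ this is at most $e^{pm(e^{t}-1)}$. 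Combining the two displays gives, for every $t > 0$,
\[
\pr{}{S \geq (1+\delta)pm} \leq \exp\!\left(pm\left(e^{t} - 1 - t(1+\delta)\right)\right).
\]

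Next I would plug in $t = \ln(1+\delta) > 0$ (this is the minimizer of $e^t - 1 - t(1+\delta)$, but we only need it to be an admissible choice), which yields the bound $\exp\!\left(pm\left(\delta - (1+\delta)\ln(1+\delta)\right)\right)$, i.e.\ $\left(e^{\delta}/(1+\delta)^{1+\delta}\right)^{pm}$. It then remains to establish the elementary scalar inequality
\[
(1+\delta)\ln(1+\delta) - \delta \;\geq\; \frac{\delta^2}{2+\delta} \qquad \text{for all } \delta \geq 0,
\]
since this immediately gives $pm\left(\delta - (1+\delta)\ln(1+\delta)\right) \leq -\frac{\delta^2 pm}{2+\delta}$ and hence the claimed exponent. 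To prove it, set $g(\delta) = (1+\delta)\ln(1+\delta) - \delta - \frac{\delta^2}{2+\delta}$; a short computation gives $g(0) = 0$, $g'(\delta) = \ln(1+\delta) - \frac{\delta(\delta+4)}{(2+\delta)^2}$ so $g'(0) = 0$, and $g''(\delta) = \frac{1}{1+\delta} - \frac{8}{(2+\delta)^3}$. Since $(2+\delta)^3 \geq 8(1+\delta)$ on $[0,\infty)$ — equality at $\delta = 0$, and the derivative of the difference, $3(2+\delta)^2 - 8$, is positive there and increasing — we get $g'' \geq 0$, hence $g'$ is nondecreasing, hence $g' \geq g'(0) = 0$, hence $g$ is nondecreasing, hence $g \geq g(0) = 0$, as needed.

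The only mildly delicate step is the last scalar inequality; the Markov/MGF part is completely routine, and in fact one could alternatively just cite the final inequality as a folklore calculus fact rather than verifying it via two derivatives as above. Everything else — factorization of the MGF, the $1+x \leq e^x$ estimate, and the substitution $t = \ln(1+\delta)$ — is standard and presents no obstacle.
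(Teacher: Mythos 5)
Your proof is correct, and it is the standard Chernoff/MGF derivation. The paper states Lemma~\ref{lem:chernoff-mult} without proof, citing it as a classical tail bound, so there is no in-paper argument to compare against; your derivation is exactly the textbook route. For the record, the steps all check out: the factorized MGF bound $\EE[e^{tS}] = (1 + p(e^t-1))^m \leq e^{pm(e^t-1)}$ is valid for $t > 0$ and $p \in (0,1]$; the choice $t = \ln(1+\delta)$ is admissible for $\delta > 0$; and the scalar inequality $(1+\delta)\ln(1+\delta) - \delta \geq \tfrac{\delta^2}{2+\delta}$ is verified by your two-derivative argument — I recomputed $g'' (\delta) = \tfrac{1}{1+\delta} - \tfrac{8}{(2+\delta)^3}$ and the auxiliary fact $(2+\delta)^3 \geq 8(1+\delta)$ on $[0,\infty)$, and both are correct. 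The only stylistic remark is that the final inequality is itself standard (it underlies the Bernstein-form Chernoff bound), so citing it directly, as you note, would also be acceptable.
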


\noindent The next lemma follows from Lemma~\ref{lem:chernoff-mult}, and bounds the norms of points sampled from a binary product distribution with bounded marginals.
\begin{lem}[Bounded Norms of Rows]\label{lem:chernoff-prod}
    Suppose $X_1,\dots,X_m$ are sampled i.i.d.\ from a product
    distribution $P$ over $\{0,1\}^t$, where the mean of each coordinate is upper
    bounded by $p$ (i.e., $\ex{}{P}\preceq p$). Then,
    \begin{enumerate}
    \item if $pt \geq 1$, then for each $i$,
        $\pr{}{\abs{X_i} \geq
    	pt\left(1+2\log(\frac{m}{\beta})\right)}
        \leq \frac{\beta}{m}$, and
    \item if $pt < 1$, then for each $i$,
        $\pr{}{\abs{X_i} \geq
    	4\log(\frac{m}{\beta})}
        \leq \frac{\beta}{m}.$
    \end{enumerate}
\end{lem}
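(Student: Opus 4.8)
The plan is to fix a single row $X_i$, observe that $|X_i| = \sum_{j=1}^{t}(X_i)_j$ is exactly the number of $1$-coordinates of $X_i$ — hence a sum of $t$ independent Bernoulli variables with means $p_1,\dots,p_t \leq p$ — and reduce to the i.i.d.\ case so that Lemma~\ref{lem:chernoff-mult} applies verbatim. Concretely, draw $U_1,\dots,U_t$ i.i.d.\ uniform on $[0,1]$ and realize $(X_i)_j = \ONE[U_j \leq p_j]$, $Z_j = \ONE[U_j \leq p]$; then $(X_i)_j \leq Z_j$ pointwise, the $Z_j$ are i.i.d.\ $\Ber(p)$, and with $S := \sum_{j=1}^{t} Z_j$ we get $|X_i| \leq S$ and $\EE[S] = pt$. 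So it suffices to bound $\pr{}{S \geq \tau}$ for the two thresholds $\tau$ in the statement. Throughout write $L := \log(m/\beta)$ and use the mild assumption $L \geq 1$, which holds in every regime where this lemma is invoked.

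\emph{Part (1): $pt \geq 1$.} Take $\tau = pt(1 + 2L)$, i.e., $\tau = (1+\delta)\EE[S]$ with $\delta = 2L$. Lemma~\ref{lem:chernoff-mult} gives $\pr{}{S \geq \tau} \leq \exp(-\tfrac{\delta^2 pt}{2+\delta}) = \exp(-\tfrac{4L^2 pt}{2+2L})$. Since $pt \geq 1$ and $L \geq 1$ we have $2Lpt \geq 2L \geq 1 + L$, which rearranges to $\tfrac{4L^2 pt}{2+2L} \geq L$; hence $\pr{}{|X_i| \geq \tau} \leq \pr{}{S \geq \tau} \leq e^{-L} = \beta/m$.

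\emph{Part (2): $pt < 1$.} Take $\tau = 4L$. Writing $\tau = (1+\delta)\EE[S]$ forces $\delta = \tfrac{4L}{pt} - 1$, and since $pt < 1 \leq L$ we get $\delta > 3$, so in particular $2 + \delta < 2\delta$ and $\tfrac{\delta^2}{2+\delta} > \tfrac{\delta}{2}$. Then Lemma~\ref{lem:chernoff-mult} gives $\pr{}{S \geq \tau} \leq \exp(-\tfrac{\delta^2 pt}{2+\delta}) \leq \exp(-\tfrac{\delta\, pt}{2}) = \exp(-\tfrac{4L - pt}{2}) \leq \exp(-\tfrac{3L}{2}) \leq e^{-L} = \beta/m$, using $pt < 1 \leq L$ in the last two inequalities. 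Combined with $|X_i| \leq S$ this yields $\pr{}{|X_i| \geq \tau} \leq \beta/m$.

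The argument is essentially routine, so I do not expect a genuine obstacle; the only points needing care are (i) making the domination coupling explicit, since Lemma~\ref{lem:chernoff-mult} is stated for i.i.d.\ variables whereas the coordinates of $X_i$ need not be identically distributed, and (ii) the split into the $pt \geq 1$ regime (where the slack over the mean is a $\Theta(\log(m/\beta))$ \emph{multiplicative} factor) and the $pt < 1$ regime (where $\EE[S]$ is sub-constant, so the threshold must instead be taken \emph{additively} of order $\log(m/\beta)$). The main thing to verify is that the stated constants — the $1 + 2\log(m/\beta)$ in part (1) and the $4\log(m/\beta)$ in part (2) — are large enough to dominate the $2+\delta$ in the denominator of the Chernoff exponent, which the two computations above confirm under $\log(m/\beta) \geq 1$.
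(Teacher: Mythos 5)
Your proof is correct and follows essentially the same route as the paper: apply the multiplicative Chernoff bound (Lemma~\ref{lem:chernoff-mult}) with $\delta = 2\log(m/\beta)$ in the $pt\geq 1$ regime and with $\delta$ of order $\log(m/\beta)/(pt)$ in the $pt<1$ regime. The only differences are cosmetic: you spell out the stochastic-domination coupling that reduces the independent-but-not-identically-distributed coordinates to i.i.d.\ $\Ber(p)$ (the paper leaves this implicit), and in part (2) you pick $\delta$ so that the threshold is exactly $4\log(m/\beta)$, whereas the paper takes $\delta=2\log(m/\beta)/(pt)$ and implicitly uses monotonicity to pass from the resulting threshold $pt+2\log(m/\beta)\leq 4\log(m/\beta)$ to the stated one.
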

\begin{proof}
    In the first case, we apply Lemma~\ref{lem:chernoff-mult} after setting $\delta=2\log(m/\beta)$ and $\mu = pm$, and by noting that $\log(m/\beta) \geq 1$. In the second case, we do the same as in the first case, but set $\delta=\tfrac{2\log(m/\beta)}{pm}$.
\end{proof}

\noindent Finally, we describe the concentration of Laplace random variables with mean $0$.
\begin{lem}[Laplace Concentration]\label{lem:lap-conc}
    Let $Z \sim \Lap(t)$. Then
    $\pr{}{\abs{Z} > t\cdot\ln(1/\beta)} \leq \beta.$
\end{lem}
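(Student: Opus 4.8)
The plan is to prove Lemma~\ref{lem:lap-conc} by a direct computation of the Laplace tail, since the claimed bound is in fact an equality. Recall that here $\Lap(t)$ denotes the centered Laplace distribution with scale parameter $t$, i.e.\ the distribution on $\RR$ with density $f(z) = \tfrac{1}{2t}\exp(-\abs{z}/t)$; this is the parameterization used by the Laplace mechanism, so that adding $\Lap(\Delta/\eps)$ noise to a query of $\ell_1$-sensitivity $\Delta$ yields $\eps$-DP.

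First I would exploit the symmetry of $f$ about $0$: for any $s \ge 0$,
$$\pr{}{\abs{Z} > s} = 2\,\pr{}{Z > s} = 2\int_s^{\infty} \frac{1}{2t}e^{-z/t}\,dz = e^{-s/t}.$$
Then I would substitute $s = t\ln(1/\beta)$, which is legitimate since $\beta \le 1$ forces $s \ge 0$, to obtain
$$\pr{}{\abs{Z} > t\ln(1/\beta)} = e^{-\ln(1/\beta)} = \beta,$$
so the stated inequality holds (with equality). Since $Z$ is continuous, the event $\{\abs{Z} > s\}$ has the same probability as $\{\abs{Z} \ge s\}$, so no care is needed about the direction of the inequality.

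There is no real obstacle here — the statement is a one-line integral. The only point worth flagging is the parameterization convention: under the rate (inverse-scale) convention for $\Lap(\cdot)$ the exponent would read $-st$ rather than $-s/t$, and the constant in the lemma would change accordingly, so one should confirm that $\Lap(t)$ is used with the scale-$t$ convention everywhere it appears (in particular wherever the Laplace mechanism is invoked in the main algorithm). An alternative route — bounding the moment generating function of $\abs{Z}$ and applying Markov's inequality — also works but yields a weaker constant than the exact computation above, so the direct integration is preferable.
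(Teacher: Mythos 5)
Your computation is correct and is the standard one-line derivation of the Laplace tail bound; the paper simply states this lemma without proof, treating it as a folklore fact. Your parameterization matches the paper's usage in Lemma~\ref{lem:laplacedp} (scale $t$, density $\tfrac{1}{2t}e^{-|z|/t}$), so nothing further needs checking.
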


\subsection{Privacy Preliminaries}

We start with the definition of differential privacy.
\begin{defn}[Differential Privacy (DP) \cite{DworkMNS06}]
    \label{def:dp}
    A randomized algorithm $M:\cX^n \rightarrow \cY$
    satisfies $(\eps,\delta)$-differential privacy
    ($(\eps,\delta)$-DP) if for every pair of
    neighboring datasets $X,X' \in \cX^n$
    (i.e., datasets that differ in at most one entry, denoted by $X \sim X'$),
    $$\forall~ Y \subseteq \cY,~~~
        \pr{}{M(X) \in Y} \leq e^{\eps}\cdot
        \pr{}{M(X') \in Y} + \delta.$$
    When $\delta = 0$, we say that $M$ satisfies
    $\eps$-differential privacy or pure differential
    privacy.
\end{defn}

\noindent These definitions of DP are closed under post-processing.
\begin{lem}[Post-Processing \cite{DworkMNS06}]\label{lem:post-processing}
    If $M:\cX^n \rightarrow \cY$ is
    $(\eps,\delta)$-DP, and $P:\cY \rightarrow \cZ$
    is any randomized function, then the algorithm
    $P \circ M$ is $(\eps,\delta)$-DP.
\end{lem}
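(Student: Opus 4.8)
The plan is to first dispatch the case where the post-processing map $P$ is deterministic, and then lift to the randomized case by conditioning on $P$'s internal coins. Throughout, fix an arbitrary pair of neighboring datasets $X \sim X'$ and an arbitrary (measurable) event $Z \subseteq \cZ$; it suffices to verify the inequality in Definition~\ref{def:dp} for this choice.

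For deterministic $P$, I would set $Y := P^{-1}(Z) = \set{y \in \cY : P(y) \in Z} \subseteq \cY$, so that the event $\set{P(M(X)) \in Z}$ is literally the same event as $\set{M(X) \in Y}$. Invoking the $(\eps,\delta)$-DP guarantee of $M$ with the test set $Y$ then gives
$$\pr{}{P(M(X)) \in Z} = \pr{}{M(X) \in Y} \leq e^{\eps}\,\pr{}{M(X') \in Y} + \delta = e^{\eps}\,\pr{}{P(M(X')) \in Z} + \delta,$$
which is exactly the $(\eps,\delta)$-DP condition for $P \circ M$.

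For a randomized $P$, I would represent it as $P(y) = g(y, R)$, where $R$ is auxiliary randomness independent of $M$'s randomness and of the data, and $g$ is a deterministic (measurable) map. Conditioning on $R = r$, the map $y \mapsto g(y,r)$ is deterministic, so the previous paragraph yields $\pr{}{g(M(X), r) \in Z} \leq e^{\eps}\,\pr{}{g(M(X'), r) \in Z} + \delta$ for every fixed $r$. Taking the expectation over $R$ on both sides and using linearity (the factor $e^{\eps}$ and the additive $\delta$ pass through the average unchanged) gives $\pr{}{P(M(X)) \in Z} \leq e^{\eps}\,\pr{}{P(M(X')) \in Z} + \delta$. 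Since $X \sim X'$ and $Z$ were arbitrary, $P \circ M$ is $(\eps,\delta)$-DP, and the case $\delta = 0$ is subsumed, so pure DP is preserved as well.

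There is no substantive obstacle here: the entire argument is the change of variable $Z \mapsto P^{-1}(Z)$ combined with averaging over the post-processor's coins. The only point requiring a modicum of care is measurability of $P^{-1}(Z)$ in a fully general measure-theoretic setting, but for the discrete output spaces relevant to this work — and under the standing convention that $P$ is measurable — this is automatic.
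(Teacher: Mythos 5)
Your proof is correct and is the standard argument for post-processing: change of variable via the preimage for deterministic $P$, then average over the post-processor's internal randomness for general $P$. The paper itself states this lemma as a known result from \cite{DworkMNS06} without providing a proof, so there is nothing in the paper to compare against; your write-up fills that in cleanly and the measurability caveat you flag is the only point of care, which is indeed moot in the discrete setting here.
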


\subsubsection{Known Differentially Private Mechanisms}

We state a standard result on achieving differential
privacy via noise addition proportional to the
\emph{sensitivity} of the function being computed~\cite{DworkMNS06}.
\begin{defn}[Sensitivity]
    Let $f : \cX^n \to \R^d$ be a function,
    its \emph{$\ell_1$-sensitivity} is
    $$\Delta_{f,1} \coloneqq \max_{X \sim X' \in \cX^n} \abs{f(X)-f(X')}.$$
\end{defn}

\noindent For a function with bounded $\ell_1$-sensitivity,
we can achieve $\eps$-DP by adding noise from
a Laplace distribution scaled to its
$\ell_1$-sensitivity.
\begin{lem}[Laplace Mechanism] \label{lem:laplacedp}
    Let $f : \cX^n \to \R^d$ be a function
    with $\ell_1$-sensitivity $\Delta_{f,1}$.
    Then the Laplace mechanism
    $$M(X) \coloneqq_R f(X) + \Lap\left(\frac{\Delta_{f,1}}
        {\eps}\right)^{\otimes d}$$
    satisfies $\eps$-DP, where ``$\coloneqq_R$'' is a notation we use to define a randomized mechanism.
\end{lem}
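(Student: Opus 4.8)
The plan is to verify the standard pointwise likelihood-ratio characterization of pure differential privacy. First I would write down the density of the output distribution. Since $M(X)$ is obtained by adding independent $\Lap(\Delta_{f,1}/\eps)$ noise to each of the $d$ coordinates of $f(X)$, and the density of $\Lap(b)$ at $z$ is $\tfrac{1}{2b}e^{-\abs{z}/b}$, the density of $M(X)$ at a point $y=(y_1,\dots,y_d)\in\R^d$ is
$$p_X(y) \;=\; \prod_{i=1}^{d} \frac{\eps}{2\Delta_{f,1}} \exp\!\left(-\frac{\eps\,\abs{y_i - f(X)_i}}{\Delta_{f,1}}\right).$$

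Next, fix a pair of neighboring datasets $X \sim X'$ and take the ratio of the two output densities at an arbitrary point $y$. The per-coordinate normalizing constants $\tfrac{\eps}{2\Delta_{f,1}}$ cancel, leaving
$$\frac{p_X(y)}{p_{X'}(y)} \;=\; \exp\!\left(\frac{\eps}{\Delta_{f,1}}\sum_{i=1}^{d}\Bigl(\abs{y_i - f(X')_i} - \abs{y_i - f(X)_i}\Bigr)\right).$$
Applying the reverse triangle inequality coordinatewise, each summand is at most $\abs{f(X)_i - f(X')_i}$, so the exponent is bounded by $\tfrac{\eps}{\Delta_{f,1}}\sum_{i=1}^{d}\abs{f(X)_i - f(X')_i} = \tfrac{\eps}{\Delta_{f,1}}\,\abs{f(X) - f(X')}$, which by the definition of $\ell_1$-sensitivity is at most $\tfrac{\eps}{\Delta_{f,1}}\cdot\Delta_{f,1} = \eps$. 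Hence $p_X(y) \le e^{\eps}\,p_{X'}(y)$ for every $y \in \R^d$.

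Finally I would integrate this pointwise bound over an arbitrary measurable set $Y \subseteq \R^d$ to obtain $\pr{}{M(X)\in Y} = \int_Y p_X(y)\,dy \le e^{\eps}\int_Y p_{X'}(y)\,dy = e^{\eps}\,\pr{}{M(X')\in Y}$, which is precisely the $\eps$-DP condition with $\delta=0$. There is no genuine obstacle here — this is the classical argument of \cite{DworkMNS06}. The only steps meriting a little care are the cancellation of the per-coordinate normalizers (so that the ratio depends only on the exponents) and the passage from the coordinatewise triangle-inequality bounds to the $\ell_1$-sensitivity, which is exactly the reason Laplace noise scaled to the $\ell_1$-sensitivity is the correct calibration.
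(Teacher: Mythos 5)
Your proof is correct and is precisely the classical argument of \cite{DworkMNS06}; the paper itself states this lemma as a known result without reproducing a proof, so there is nothing in the paper to diverge from. The density-ratio computation, the coordinatewise reverse triangle inequality, the appeal to $\ell_1$-sensitivity, and the final integration over measurable sets are all in order.
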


\noindent We finally state the result from \cite{HopkinsKMN23} about
estimating the means of sub-Gaussian distributions under
pure DP in polynomial time.
\begin{thm}[Sub-Gaussian Learner from {\cite[Theorem~5.1]{HopkinsKMN23}}]\label{thm:sg-learner}
    Assume that $0<\alpha,\beta,\eps<1$ and $R>0$.
    Let $\mu \in \R^d$, where $\|\mu\|_2 \leq R$, be unknown.
    There exists an $\eps$-DP algorithm ($\DPSGLearner$)
    that takes $n$
    i.i.d.\ samples from a sub-Gaussian distribution
    with mean $\mu$ and covariance $0 \preceq \Sigma \preceq \id$,
    such that,
    $$n \geq
        \wt{O}_{\alpha}\left(\frac{d + \log(1/\beta)}{\alpha^2}
            + \frac{d + \log(1/\beta)}{\alpha \eps}
            + \frac{d\log(R)}{\eps}\right),$$
    where $\wt{O}_{\alpha}(\cdot)$ hides polylogarithmic factors
    in $\tfrac{1}{\alpha}$, runs in time $\poly(n,d)$,
    and with probability at least $1-\beta$, outputs
    $\wh{\mu}$ such that $\|\mu - \wh{\mu}\|_2 \leq \alpha$.
\end{thm}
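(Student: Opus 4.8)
\emph{Proof plan.} This statement is imported essentially verbatim from \cite[Theorem~5.1]{HopkinsKMN23}, so for us ``proving'' it means recalling the architecture of their argument and checking that nothing breaks when the assumption $\Sigma = \id$ is relaxed to $0 \preceq \Sigma \preceq \id$. The plan is to realize $\DPSGLearner$ as an instance of the \emph{robustness-implies-privacy} mechanism. Fix a $\gamma$-net $\mathcal{N}$ of the Euclidean ball $\{v \in \R^d : \|v\|_2 \leq R\}$ with $\gamma$ of order $\alpha$, and to each candidate $v \in \mathcal{N}$ and dataset $X$ attach a score $\mathrm{score}(v, X)$ measuring how \emph{robustly} $v$ fits $X$: roughly, the largest number of points one may delete from $X$ so that the remaining sub-dataset still certifies, via bounded empirical first and second moments, that its mean lies within $O(\alpha)$ of $v$. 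The output is a draw from the exponential mechanism with this score. Pure $\eps$-DP is then immediate, provided one checks that $\mathrm{score}(\cdot, X)$ has $\ell_1$-sensitivity $O(1)$ under a single-element change of $X$ --- which holds by design, since such a change shifts the ``number of deletions'' by at most one.

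For utility I would use sub-Gaussian concentration to show that, for $n$ as in the theorem, the empirical distribution is \emph{resilient} at scale $\alpha$: for a small constant $c$, every $(1 - c)$-fraction of the sample has empirical mean within $O(\alpha)$ of $\mu$ and suitably bounded empirical covariance. Resilience forces $\mathrm{score}(v^\star, X) \geq \Omega(n)$ for the net point $v^\star$ closest to $\mu$, while any $v$ with $\|v - \mu\|_2$ much larger than $\alpha$ has a far smaller score; the resulting gap, together with $\log|\mathcal{N}| = O(d \log(R/\alpha))$, is what makes the exponential mechanism concentrate on good net points. Tracking the three requirements --- resilience at scale $\alpha$, the score gap beating $\tfrac{1}{\eps}\log|\mathcal{N}|$, and $\tfrac{1}{\eps}\log|\mathcal{N}|$ itself --- yields exactly the three summands $\tfrac{d + \log(1/\beta)}{\alpha^2}$, $\tfrac{d + \log(1/\beta)}{\alpha\eps}$, and $\tfrac{d \log R}{\eps}$.

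The real work, and the step I expect to be hardest, is computational efficiency: the exact deletion score is NP-hard and $|\mathcal{N}|$ is exponential in $d$. Following \cite{HopkinsKMN23}, I would replace the combinatorial score by a constant-degree sum-of-squares relaxation and argue that (i) it is computable by a polynomial-size semidefinite program, (ii) it still has $O(1)$ sensitivity, and (iii) it is sandwiched between the true score (up to constants) from above and a mildly weakened resilience bound from below --- so that both the privacy argument (sensitivity) and the utility argument (score gap, now requiring an \emph{SoS} certificate of resilience for sub-Gaussian data) survive. Sampling from the exponential mechanism is then done without enumerating $\mathcal{N}$: the super-level sets of the relaxed score are approximately convex, so a log-concave-type sampler localized near the mode produces the output in $\poly(n, d)$ time.

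Finally, the relaxation $\Sigma \preceq \id$ is harmless for exactly the reason stated in the Remark above: the identity-covariance assumption in the sub-Gaussian instantiation of \cite{HopkinsKMN23} is used only in the sample-resilience statement (their Corollary~5.4) and in its SoS certificate (their Lemma~B.1), and both are monotone in $\Sigma$ --- a sub-Gaussian random vector with covariance $\preceq \id$ is only more concentrated --- so both go through unchanged. Everything outside the SoS-proxy construction and the efficient sampler is concentration and union-bound bookkeeping.
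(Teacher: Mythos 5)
This statement is not proved in the paper at all: it is imported verbatim from \cite[Theorem~5.1]{HopkinsKMN23} and used as a black box. The only piece of original reasoning the paper offers in its vicinity is the Remark in Section~1.3, which observes that the hypothesis $\Sigma = \id$ in \cite{HopkinsKMN23} can be weakened to $\Sigma \preceq \id$ because the sub-Gaussian mean-estimation argument there rests on their Corollary~5.4 and Lemma~B.1, both of which survive the relaxation. So there is no ``paper's own proof'' to compare your attempt against line by line; the honest comparison is between your sketch and (a) that Remark and (b) the cited source.

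Against those two yardsticks your proposal holds up. Your final paragraph reproduces the Remark's content almost exactly --- same two ingredients (Corollary~5.4, Lemma~B.1), same monotonicity-in-$\Sigma$ justification --- which is the only part of this theorem the paper actually takes responsibility for. The rest of your sketch is a reasonable high-level recall of the \cite{HopkinsKMN23} architecture: a robustness-based (inverse-sensitivity / deletion-count) score, an exponential mechanism over a net of the radius-$R$ ball giving the three-term sample complexity with the $\tfrac{d\log R}{\eps}$ contribution from $\log|\mathcal{N}|$, and an SoS relaxation of the score plus log-concave sampling to avoid both NP-hardness and enumeration of the net. None of this is wrong, but be aware that it is not a proof --- it is a table of contents for one --- and, more to the point, it does more work than the paper does or needs to. If the goal is to reproduce the paper's treatment of this theorem, a one-line citation plus the $\Sigma \preceq \id$ remark is the whole job; if the goal is to actually re-derive Theorem~5.1 of \cite{HopkinsKMN23}, then each of the steps you list (sensitivity of the SoS proxy, the resilience lower bound and its SoS certificate, correctness of the approximate sampler) is a substantial lemma that would need to be filled in, and your sketch currently asserts them rather than proving them.
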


\section{A Pure DP Product Distribution Estimator}
\label{sec:product}

In this section we introduce and analyze our algorithm for learning a product distribution $P$ over $\zo^d$ in total-variation distance. The pseudocode is stated in Algorithm~\ref{alg:ppde}, with some additional observations and information regarding the notations stated in Section~\ref{sec:algorithm}. For simplicity of presentation, we assume that the product distribution has mean, whose marginals are bounded by $\frac12$ (i.e., $\ex{}{P} \preceq \frac12$), however, we would like to clarify that this assumption is essentially without loss of any generality, and is easily removable at the cost of a meagre constant factor in the sample complexity.
We make an additional assumption that
$d \geq 2$ and that $\beta \leq \tfrac{1}{2}$.

The following is the main result of our work. We prove the privacy, the computational efficiency, and the accuracy guarantees of our algorithm in Sections~\ref{sec:privacy}, \ref{sec:efficiency}, and~\ref{sec:accuracy}, respectively.
\begin{thm}\label{thm:ppde}
    For every $\eps,\alpha,\beta > 0$, there exists an $\eps$-DP algorithm ($\PPDE$) that takes $n$ i.i.d.\ samples from a product distribution $P$ over $\zo^d$, and returns a product distribution $Q$ over $\zo^d$ in $\poly(n,d,\tfrac{1}{\eps},\tfrac{1}{\alpha},\log(\tfrac{1}{\beta}))$ time, such that if
    $$n \geq \wt{O}_{\alpha}\left(
        \frac{d\log^2(d/\beta)}{\alpha^2} +
        \frac{d\log^2(d/\eps\beta)}{\eps\alpha}\right),$$
    where $\wt{O}_{\alpha}(\cdot)$ hides polylogarithmic factors
    in $\tfrac{1}{\alpha}$,
    then with probability at least $1-\beta$, $\SD(P,Q) \leq \alpha$.
\end{thm}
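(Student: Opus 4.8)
The plan is to follow the two-phase template of \cite{KamathLSU19}: a private partitioning phase that sorts the $d$ coordinates into $O(\log(d/\alpha))$ magnitude classes, and an estimation phase that handles the ``heavy'' (selected) coordinates with one call to $\DPSGLearner$ after rescaling, and the remaining ``light'' coordinates with a final Laplace step; the output $Q$ is the product of the resulting per-coordinate Bernoullis. For privacy I would split $\eps$ as $\Theta(\eps/\log(d/\alpha))$ to each partitioning round (an instance of the Laplace mechanism, Lemma~\ref{lem:laplacedp}, on the clipped empirical means of the still-undetermined coordinates), $\Theta(\eps)$ to $\DPSGLearner$, and $\Theta(\eps)$ to the final Laplace step; using disjoint subsamples for the ``partitioning $+$ final Laplace'' part and for the $\DPSGLearner$ part costs no privacy, since changing one row touches only one subsample. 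Within each part the mechanisms compose by basic composition of pure differential privacy, and every other operation --- clipping row $\ell_1$-norms, thresholding the noisy estimates, recording the per-class rescaling factors, inverse-rescaling, clipping into $[0,1]$, concatenating --- is post-processing (Lemma~\ref{lem:post-processing}); for the $\DPSGLearner$ call I note that, conditioned on the partitioning transcript, the row-wise rescale-and-clip map sends neighbors to neighbors, so Theorem~\ref{thm:sg-learner} applies verbatim. This gives $\eps$-DP, and since each of the $O(\log(d/\alpha))$ Laplace rounds runs in $\poly(n,d)$ time and $\DPSGLearner$ in $\poly(n,d)$ time by Theorem~\ref{thm:sg-learner}, the running time is $\poly(n,d,1/\eps,1/\alpha,\log(1/\beta))$.

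The heart of the argument is accuracy, which I would run on a single ``good event'' $\mathcal{E}$ of probability $\ge 1-\beta$: by Lemma~\ref{lem:chernoff-add} every empirical marginal (in every subsample) is within $O(\sqrt{p_j\log(d/\beta)/n}+\log(d/\beta)/n)$ of $p_j$; by Lemma~\ref{lem:chernoff-prod} no row's $\ell_1$-norm over the undetermined coordinates ever exceeds its clipping threshold, so clipping is inert (the running upper bound on the marginals enters here); by Lemma~\ref{lem:lap-conc} all Laplace noises are within a $\log(d/\eps\beta)$ factor of their scales; and $\DPSGLearner$ succeeds. Each sub-event is assigned failure probability $\beta/\poly(d,\log(d/\alpha))$.

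On $\mathcal{E}$ the partitioning is deterministically correct, which I would prove by induction maintaining the invariant that at the start of round $k$ every undetermined coordinate has $p_j \le u_k$, where $u_1=\tfrac12$, $u_{k+1}$ is a constant factor smaller, and the rounds stop at a small threshold $u_{\min}$ after $K=O(\log(d/\alpha))$ rounds. Calibrating the round-$k$ Laplace noise to the $\ell_1$-sensitivity of the clipped empirical means of the undetermined set $S_k$ (which, thanks to the clipping and the running upper bound, is $\wt O(u_k|S_k|/n)$ rather than $|S_k|/n$) and thresholding at $\theta_k=\Theta(u_k)$, the good-event errors are $\ll u_k$ once $n$ is as large as in the theorem; consequently every coordinate with $p_j>u_{k+1}$ is selected in round $k$ (so the invariant survives) and every selected coordinate has $p_j=\Theta(u_k)$, so dividing it by the square root of a suitable constant times its noisy estimate $p'_j\asymp p_j$ yields a rescaled coordinate of mean $\Theta(\sqrt{u_k})$ and variance $\le 1$. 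Concatenating all selected coordinates, the rescaled data meets the hypotheses of Theorem~\ref{thm:sg-learner} --- covariance $\preceq\id$ by the variance bound, and the tail control it needs from boundedness of the rescaled Bernoullis (cf.\ the remark after Theorem~\ref{thm:sg-learner}) --- with radius $R=O(\sqrt d)$; running it to $\ell_2$-accuracy $\Theta(\alpha)$ gives $\widehat\mu$ with $\|\widehat\mu-\mu\|_2\le\Theta(\alpha)$. Inverse-rescaling coordinate $j$ and flooring the result at a constant times $p'_j$, Lemma~\ref{lem:chi-ub} yields $\sum_{j\text{ heavy}}\CS(P_j\|Q_j)\lesssim\sum_j(\widehat\mu_j-\mu_j)^2 + (\text{contribution of the coordinates on which the floor is active})\lesssim\alpha^2$, the second term being small because those coordinates carry only $O(\alpha^2)$ of total rescaled mass. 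The light coordinates, all with $p_j\le u_{\min}$, are estimated by the final low-sensitivity Laplace step and then floored; choosing $u_{\min}=\Theta(\alpha^2/(d\,\polylog))$ so that both the flooring bias and the residual noise are absorbed, their total $\CSD$-cost is also $\lesssim\alpha^2$. Summing via sub-additivity (Lemma~\ref{lem:sd-ub}) gives $\CS(P\|Q)\lesssim\alpha^2$, and Pinsker's inequality (Lemma~\ref{lem:pinsker}), $2\SD(P,Q)^2\le\KL(P\|Q)\le\CS(P\|Q)$, converts this to $\SD(P,Q)\le\alpha$ after the constants are tuned. Finally I would collect the resulting lower bounds on $n$ --- the empirical-accuracy bound $n\gtrsim d\log^2(d/\beta)/\alpha^2$, the $\DPSGLearner$ bound $n\gtrsim d\,\polylog/\alpha^2+d\,\polylog/(\eps\alpha)$ (its $d\log R/\eps$ term being dominated), and the partitioning and final-round bounds --- and verify they are subsumed by the bound in the theorem.

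I expect the main obstacle to be exactly this last bookkeeping, in particular making the partitioning and the light-coordinate estimation fit inside the $\wt O(d/(\eps\alpha))$ privacy budget rather than the $\wt O(d/(\eps\alpha^2))$ one would get from a coarse analysis: the partitioning must be simultaneously (i) private and efficient, (ii) correct at classifying coordinates by magnitude without being told the scale in advance --- the bootstrapping resolved by geometrically shrinking the assumed upper bound $u_k$ together with $\ell_1$-clipping the rows so that the Laplace sensitivity tracks $u_k|S_k|$ rather than $|S_k|$ --- and (iii) such that the rescaled data it feeds $\DPSGLearner$ satisfies that learner's hypotheses; threading these, and tracking how $u_{\min}$, the per-coordinate rescalings, the learner's single $\ell_2$ error, and the final round's noise combine through Lemma~\ref{lem:chi-ub} and sub-additivity to give $\CS(P\|Q)=O(\alpha^2)$ within the claimed sample budget, is the delicate part. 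Privacy and running time, by contrast, are routine.
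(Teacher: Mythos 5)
Your proposal is correct and follows essentially the same approach as the paper: iterative private partitioning with geometrically shrinking upper bounds and $\ell_1$-clipping to control Laplace sensitivity, a single call to $\DPSGLearner$ on the rescaled heavy coordinates, a low-sensitivity Laplace step for the remaining light coordinates, and the $\chi^2$/Pinsker machinery to convert per-coordinate error into a total-variation bound. The only cosmetic differences are that the paper allocates a fresh disjoint block of samples to each partitioning round (so every round keeps the full budget $\eps$ and no composition is invoked) rather than basic-composing $\eps/\log(d/\alpha)$ per round on a shared block as you do, and that it rescales each selected coordinate by the round's common cap $u_r$ rather than by a per-coordinate noisy estimate $p'_j$ --- both choices yield the same sample complexity up to polylogarithmic factors.
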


\subsection{The Algorithm}\label{sec:algorithm}

We first introduce a notation for the \emph{truncated mean} of all the points in a dataset. Given a data point $x \in \zo^{d}$ and $B \geq 0$, we write
\begin{equation*}
\trunc_{B}(x) = 
\begin{cases}
x &\textrm{if $\abs{x} \leq B$}\\
\frac{B}{\abs{x}} \cdot x &\textrm{if $\abs{x} > B$}
\end{cases}
\end{equation*}
to denote the truncation (or clipping) of $x$ to an $\ell_1$-ball of radius $B$.  Given a dataset $X = (X_1,\dots,X_{m}) \in \zo^{m \times d}$ and $B > 0$, we use
\begin{equation*}
\tmean_{B}(X) = \frac{1}{m} \sum_{i=1}^{m} \trunc_{B}(X_i)
\end{equation*}
to denote the mean of the truncated data points in $X$.
If one of the data points in $X$ does not satisfy
the norm bound (i.e., its $\ell_1$ norm is greater than $B$) when $X$ is served as an input to $\tmean_B$, then we will say,
``truncation occurred,'' as a shorthand.
We point out a couple of important observations.
\begin{itemize}
    \item The $\ell_1$-sensitivity of $\tmean_{B}$ is $\frac{B}{m}$, while the $\ell_1$-sensitivity of the un-truncated mean is $\tfrac{d}{m}$.
    \item Unless $\abs{X_{i}} > B$ for some $i \in [m]$, the truncated mean equals the un-truncated mean, i.e., $\tmean_{B}(X) = \frac{1}{m} \sum_{i=1}^{m} X_{i}$.
\end{itemize}

We also use the following notational conventions.
Given a data point $X_i \in \zo^d$, we use $X_i[j]$ to refer to its $j$-th coordinate, and for a subset of coordinates $S \subseteq [d]$, the notation $X_i[S] = (X_i[j])_{j \in S}$ to refer to the vector $X_i$ with coordinates restricted to $S$. Given a dataset $X = (X_1,\dots,X_m)$, we use the notation $X[S] = (X_1[S],\dots,X_m[S])$ to refer to the dataset consisting of each $X_i[S]$.
Next, for a domain $\cX$, a variable $x \in \cX$, and a probability distribution $D$ over $\cX$, we write ``$x \gets_R D$'' to indicate that a sample has been drawn from $D$, and assigned to $x$.
Finally, in Algorithm~\ref{alg:ppde}, we use $C_{\alpha}$ to denote the $\polylog(1/\alpha)$ quantity in the sample complexity from Theorem~\ref{thm:sg-learner}.

We first briefly describe Algorithm~\ref{alg:ppde} here. The algorithm runs in two phases, essentially -- partitioning and final phases.
\begin{itemize}
    \item \textbf{Partitioning:} This is an iterative phase. In each iteration, the algorithm truncates all the points in the dataset according to an upper bound, and computes the empirical mean of all the truncated points, and then adds independent Laplace noise to each of the remaining coordinates of the mean vector in that iteration, and works with only the noisy values of those coordinates. The coordinates with noisy values above a certain threshold are put aside, while those with noisy values below that threshold are the remaining coordinates to work with in the next iteration or in the final phase. The important observation is that the noisy values that appear large cannot actually be small without the noise with high probability, and vice versa, which we show later in the analysis. This helps us separate out the large coordinates with high confidence. In the next iteration, we reduce the upper bound and the threshold because the larger coordinates were separated out already, and we are now left with the ones with lower magnitudes. When the number of remaining coordinates is small enough, we exit the loop. At this point, we have all the batches of ``similar'' coordinates partitioned, so we scale the coordinates of all the batches up according to their respective upper bounds that we had used in the previous iterations in which they were separated out, so that they all now have magnitudes that are within a constant factor of one another. Then we apply the learner from Theorem~\ref{thm:sg-learner}, which gives us an accurate estimate of that scaled vector in $\ell_2$ distance. We rescale the coordinates of that estimate with the inverse of their respective, original scaling factors, which gives us an accurate total-variation estimate of those coordinates.
    \item \textbf{Final:} This is a ``one-shot'' phase. The coordinates, which were not estimated in the previous phase, are estimated in this phase. Here, the algorithm simply truncates all the points in the dataset as per a small upper bound, and computes their empirical mean, and then adds independent Laplace noise to all these remaining coordinates of the mean vector. The important observation here is that because the means of all these coordinates are small enough, we do not end up requiring a lot of noise to ensure privacy, so we are able to obtain an accurate $\ell_1$ estimate, which is sufficient to get an accurate total-variation estimate, at a low cost in the sample complexity.
\end{itemize}
In the end, the estimates from both the phases are combined appropriately, and released.

\begin{algorithm}[h!] \label{alg:ppde}
\caption{Pure DP Product Distribution Estimator $\PPDE_{\eps, \alpha, \beta}(X)$}
\KwIn{Samples $X_1,\dots,X_n \in \zo^{d}$ from an unknown product distribution $P$ satisfying $\ex{}{P} \preceq \frac12$.  Parameters $\eps, \alpha, \beta > 0$.}
\KwOut{A product distribution $Q$ over $\{0,1\}^d$ such that $\SD(P,Q) \leq \alpha$.}\vspace{10pt}
Set parameters:
$
R \gets \log_{2}\left(d/2\right)~~~
m \gets 2048 d \log(d/\beta) + \tfrac{2048 d \log(d/\eps\beta)}{\eps}
$\\
$
\qquad \qquad \qquad~~ m_0 \gets C_{\alpha}\left(\tfrac{d+\log(1/\beta)}{\alpha^2} + \tfrac{d+\log(1/\beta)}{\alpha\eps} + \tfrac{d\log(d)}{\eps}\right)~~~
m_1 \gets \tfrac{128d\log(d/\beta)}{\alpha^2} + \tfrac{256d\log(d/\eps\alpha\beta)}{\eps\alpha}
$\\
\vspace{10pt}

Split $X$ into three datasets $Y$, $Y^F$, and $Z$, of sizes $mR$, $m_1$, and $m_0$, respectively.\\
Split $Y$ into $R$ blocks $Y^1,\dots,Y^R$ of sizes $m$ each, denoted by $Y^{r} = (Y^{r}_{1},\dots,Y^{r}_{m})$.\\
Let $q \in (0,1)^d$ with $q[j] \gets 0$ for every $j \in [d]$, and let $S_1 = [d]$, $u_1 \gets \frac{1}{2}$, $\tau_1 \gets \frac{3}{16}$, and $r \gets 1$.\\
\vspace{10pt}

\tcp{Partitioning rounds.}
Let $T_P \gets \emptyset$.\\
\While{$u_r \abs{S_r} \geq 1$ and $r \leq R$}{
    Let $S_{r+1}, T_r \gets \emptyset$.\\
    Let $B_r \gets 3 u_r |S_r| \log(mR/\beta)$.\\
    Let $z_r \getsr \Lap\left(\tfrac{B_r}{\eps}\right)^{\otimes \abs{S_r}}$ and $q_r[S_r] \gets \tmean_{B_r}(Y^r[S_r]) + z_r$.\\
    \For{$j \in S_r$}{
        \eIf{$q_r[j] < \tau_r$}{
            Add $j$ to $S_{r+1}$.
            }
            {
            Add $j$ to $T_{r}$.
            }
    }
    Set $Z[T_r] \gets \tfrac{1}{\sqrt{u_r}}\cdot Z[T_r]$.\\
    Set $T_P \gets T_P \cup T_r$.\\
    \tcp{Update the loop's parameters.}
    Set $u_{r+1} \gets \frac12 u_{r}$, $\tau_{r+1} \gets \frac12 \tau_{r}$, and $r \gets r+1$.
}
\tcp{Run the sub-Gaussian learner from \cite{HopkinsKMN23} restricted to $T_P$.}
Let $\wh{q}[T_P] \getsr \DPSGLearner_{\eps,\frac{\alpha}{5},\beta,\sqrt{d}}(Z[T_P])$.\\
\For{$i \in [r-1]$}{
    Set $q[T_i] \gets \sqrt{u_i}\cdot\wh{q}[T_i]$.
}
\vspace{10pt}
    
\tcp{Final round.}
Let $S_F \gets [d] \setminus T_P$.\\
\If{$\abs{S_F} \geq 1$}{
    Let $B_F \gets 4\log(m_1/\beta)$.\\
    Let $z \getsr \Lap\left(\tfrac{B_F}{\eps}\right)^{\otimes \abs{S_F}}$ and $q[S_F] \gets \tmean_{B_F}(Y^F[S_F]) + z$.
}
\vspace{10pt}

\tcp{Return the final estimate.}
\Return $Q = \Ber(q[1]) \otimes \dots \otimes \Ber(q[d])$.
\end{algorithm}

\subsection{Privacy Analysis}\label{sec:privacy}

The privacy analysis of Algorithm~\ref{alg:ppde} is based on the privacy guarantees of the Laplace mechanism and bounded sensitivity of the truncated mean, along with the privacy guarantees of $\DPSGLearner$ (Theorem~\ref{thm:sg-learner}).

\begin{prop}\label{thm:ppde_dp}
    For every $\eps,\alpha,\beta > 0$, $\PPDE_{\eps,\alpha,\beta}(X)$ satisfies $\eps$-DP.
\end{prop}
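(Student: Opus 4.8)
The plan is to exploit the fact that $\PPDE$ touches its input $X$ only through three \emph{disjoint} sub-datasets and that each stage of the algorithm reads only one of them. Recall that $X$ is split, by a data-independent rule, into $Y$ of size $mR$, $Y^F$ of size $m_1$, and $Z$ of size $m_0$, and that $Y$ is further split into disjoint blocks $Y^1,\dots,Y^R$; hence if $X\sim X'$ differ in a single entry, that entry lies in exactly one of $Y,Y^F,Z$ (and, in the $Y$ case, in exactly one block $Y^{r^\ast}$), the other two sub-datasets being unchanged. I would decompose the run of $\PPDE$ into three mechanisms: (i) the partitioning mechanism $M_1$, which reads only $Y$ and outputs the transcript $\pi=(q_1[S_1],q_2[S_2],\dots)$ of noisy truncated means produced in the \texttt{while} loop; (ii) the sub-Gaussian stage $M_2(Z;\pi)$, which rescales the columns of $Z[T_P]$ by the data-independent factors $1/\sqrt{u_i}$, runs $\DPSGLearner_{\eps,\alpha/5,\beta,\sqrt d}$, and rescales the output back to produce $q[T_P]$; and (iii) the final stage $M_3(Y^F;\pi)$, which, when $S_F\neq\emptyset$, releases $q[S_F]=\tmean_{B_F}(Y^F[S_F])+\Lap(B_F/\eps)^{\otimes|S_F|}$. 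The first thing I would verify is the bookkeeping fact that the active sets $S_r$, the radii $B_r$, the thresholds $\tau_r$, the sets $T_r,T_P,S_F$, and even whether round $r$ of the loop executes are \emph{all deterministic functions of the transcript $\pi$ alone} (together with the data-independent constants $u_r=2^{-r}$ and $R$). Since the returned $Q=\bigotimes_{j}\Ber(q[j])$ is then a deterministic function of $\mathcal T=(\pi,q[T_P],q[S_F])$, by Lemma~\ref{lem:post-processing} it suffices to prove that $\mathcal T$ is $\eps$-DP.

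Next I would dispose of the two easy stages. For every fixed $\pi$, the map $M_3(\cdot;\pi)$ is the Laplace mechanism of Lemma~\ref{lem:laplacedp} applied to $\tmean_{B_F}$, whose $\ell_1$-sensitivity is at most $B_F/m_1\le B_F$, against noise scale $B_F/\eps$, so it is $\eps$-DP in $Y^F$. The map $M_2(\cdot;\pi)$ is $\DPSGLearner$ — $\eps$-DP by Theorem~\ref{thm:sg-learner} — pre-composed with the deterministic, row-wise coordinate-rescaling of $Z[T_P]$ (which sends neighboring datasets to neighboring datasets) and post-composed with the inverse rescaling of its output, so it is $\eps$-DP in $Z$ by Lemma~\ref{lem:post-processing}. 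Because the noise vectors $z_r$ used by $M_1$, the internal coins of $\DPSGLearner$ used by $M_2$, and the vector $z$ used by $M_3$ are drawn independently, the law of $\mathcal T=(\pi,a,b)$ under input $X$ factorizes as $p^{M_1}_{Y}(\pi)\cdot p^{M_2}_{Z}(a\mid\pi)\cdot p^{M_3}_{Y^F}(b\mid\pi)$ (as densities where appropriate). If $X\sim X'$ differ only in $Y^F$, the first two factors are identical and the third changes by at most $e^\eps$; if they differ only in $Z$, the first and third are identical and the middle changes by at most $e^\eps$; in either case $\mathcal T$ is $\eps$-DP. What remains — and what I expect to be the main obstacle — is the case $X\sim X'$ differing in $Y$, which requires showing that $M_1$ itself is $\eps$-DP.

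The subtlety in $M_1$ is that the loop has a data-dependent number of iterations and data-dependent parameters $S_r,B_r$, yet it must cost only $\eps$ total rather than $R\eps$: this is a \emph{parallel} composition over the disjoint blocks $Y^1,\dots,Y^R$, not a sequential one. I would make this precise as follows. Write $M_1$ as the adaptive composition of $N_1,\dots,N_R$, where $N_r$ takes the prefix transcript $\pi_{<r}=(q_1,\dots,q_{r-1})$ and the block $Y^r$ and returns $q_r$ (or a halt symbol $\perp$ when $\pi_{<r}$ already indicates the loop has stopped). For every fixed $\pi_{<r}$, the map $N_r(\cdot\,;\pi_{<r})$ is $\eps$-DP in $Y^r$: either it is constantly $\perp$, or $\pi_{<r}$ determines $S_r$ and $B_r$ and $N_r$ is the Laplace mechanism on $\tmean_{B_r}(Y^r[S_r])$, whose $\ell_1$-sensitivity is at most $B_r/m\le B_r$ against noise scale $B_r/\eps$ (Lemma~\ref{lem:laplacedp}). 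Now fix neighbors $Y\sim Y'$ agreeing on every block but $Y^{r^\ast}$. In the factorization $p^{M_1}_{Y}(\pi)=\prod_{r\in R(\pi)}(\text{law of }q_r\text{ given }\pi_{<r}\text{ and }Y^r)$, where the set of executed rounds $R(\pi)\subseteq[R]$ depends on $\pi$ only, every factor with $r\neq r^\ast$ is unchanged between $Y$ and $Y'$, while the $r^\ast$-th factor, if present, changes by at most $e^\eps$ by the $\eps$-DP of $N_{r^\ast}$; and whether $r^\ast\in R(\pi)$ depends only on $\pi_{<r^\ast}$, hence is the same in both runs. Therefore $p^{M_1}_{Y}(\pi)\le e^\eps\,p^{M_1}_{Y'}(\pi)$ for every $\pi$, i.e.\ $M_1$ is $\eps$-DP; feeding this into the factorization of the previous paragraph settles the last case, so $\mathcal T$, and hence $Q$, is $\eps$-DP. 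The only genuinely non-routine ingredient is this disjoint-block composition argument for $M_1$; everything else is a direct appeal to Lemmas~\ref{lem:post-processing} and~\ref{lem:laplacedp} and Theorem~\ref{thm:sg-learner}.
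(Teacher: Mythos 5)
Your proof is correct and follows essentially the same approach as the paper: exploit disjointness of $Y^1,\dots,Y^R$, $Y^F$, and $Z$ so that no composition cost is incurred, and invoke Lemma~\ref{lem:laplacedp} for each Laplace step and Theorem~\ref{thm:sg-learner} for the \DPSGLearner{} call. The paper states this more tersely ("each individual's data is used only once\ldots we do not need to apply composition"), while you spell out the parallel-composition argument over the adaptively parameterized blocks $Y^1,\dots,Y^R$ and the transcript-measurability of $S_r$, $B_r$, $T_P$, and the rescaling factors, which is a worthwhile elaboration but not a different method.
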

\begin{proof}
    Each individual's data is used only once in
    exactly one of these three situations -- while computing
    $\tmean_{B_r}(Y^r)$ in some round $r$ of the partitioning
    rounds, in the call to $\DPSGLearner$ at the end of
    the \textbf{While}-loop of the partitioning rounds,
    or in the final round while computing $\tmean_{B_F}(Y^F)$.
    In other words, since all the datasets -- $Y^r$ (for all
    $i \in [R]$), $Y^F$, and $Z$ -- are disjoint and only
    used once in the entire algorithm, we do not need
    to apply composition, but instead, we just have to
    show that each individual computation is $\eps$-DP.

    In each partitioning round $r$, we perform an $\ell_1$
    truncation on all rows to within $B_r$, and add Laplace
    noise scaled to $\tfrac{B_r}{\eps}$ to every coordinate
    of $\tilde{p}_r[S_r]$. By Lemma~\ref{lem:laplacedp}, this
    satisfies $\eps$-DP. By similar reasoning, the final
    round also satisfies $\eps$-DP. Finally, the mean of
    $P$ has an $\ell_2$ norm of at most $\sqrt{d}$. Therefore,
    from the privacy guarantees of $\DPSGLearner$
    (Theorem~\ref{thm:sg-learner}), we have $\eps$-DP for
    this step, as well.
\end{proof}

\subsection{Efficiency Analysis}\label{sec:efficiency}

Here, we prove the computational efficiency of Algorithm~\ref{alg:ppde}, which is a key feature of our work.

\begin{prop}\label{thm:ppde_eff}
    There exists a polynomial $f:\R\times\R\to\R$, such that
    for every $\eps,\alpha,\beta,n,d>0$ and
    $X \in \zo^{n \times d}$,
    $\PPDE_{\eps,\alpha,\beta}(X)$ has a running time
    of $f(n,d)$.
\end{prop}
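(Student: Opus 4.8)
The plan is to walk through Algorithm~\ref{alg:ppde} line by line and verify that every operation it performs takes time polynomial in $n$ and $d$ (where $n$ implicitly bounds $mR$, $m_0$, $m_1$, and all the other size parameters, since those are all at most $n$ by construction of the split). First I would dispatch the parameter-setting lines: computing $R = \log_2(d/2)$, $m$, $m_0$, $m_1$, $C_\alpha$ and so on are all elementary arithmetic on quantities of size $\poly(n,d)$, hence done in time polynomial in the bit-length of the inputs and thus certainly $f(n,d)$ for a suitable polynomial. The splitting of $X$ into $Y$, $Y^F$, $Z$ and then $Y$ into the blocks $Y^1,\dots,Y^R$ is just a partition of at most $n$ rows each of length $d$, so $O(nd)$ time; initializing $q$, $S_1$, $u_1$, $\tau_1$ is $O(d)$.

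Next I would handle the partitioning \textbf{While}-loop. The loop runs at most $R = O(\log d)$ times since $r \le R$ is an exit condition. Inside one iteration: computing $B_r$ is $O(1)$; sampling $z_r$ is $|S_r| \le d$ draws from a Laplace distribution (each a constant number of elementary operations); computing $\tmean_{B_r}(Y^r[S_r])$ requires, for each of the $m$ rows, computing an $\ell_1$ norm ($O(d)$), comparing to $B_r$, possibly rescaling, then averaging — total $O(md) \le O(nd)$; the inner \textbf{For}-loop over $j \in S_r$ does $O(|S_r|) = O(d)$ comparisons and set insertions; rescaling $Z[T_r]$ touches at most $|Z| \cdot |T_r| \le nd$ entries; updating $T_P$, $u_{r+1}$, $\tau_{r+1}$ is cheap. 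So each iteration is $O(nd)$ and the whole loop is $O(nd\log d)$. The call to $\DPSGLearner_{\eps,\alpha/5,\beta,\sqrt d}(Z[T_P])$ runs in time $\poly(|Z[T_P]|, |T_P|) \le \poly(n,d)$ by Theorem~\ref{thm:sg-learner} (note the dimension passed is $|T_P| \le d$ and the sample count is $|Z| = m_0 \le n$; the radius $\sqrt d$ only affects the sample-complexity bound, not the runtime-polynomial's arguments). The subsequent \textbf{For}-loop over $i \in [r-1]$ does at most $R$ rescalings of subvectors of total length $\le d$, so $O(d \log d)$.

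Then the final round: computing $S_F = [d]\setminus T_P$ is $O(d)$; the \textbf{If} block, when entered, computes $B_F$ in $O(1)$, samples $|S_F| \le d$ Laplace variates, and computes $\tmean_{B_F}(Y^F[S_F])$ in $O(m_1 d) \le O(nd)$. Finally, assembling and returning $Q = \Ber(q[1]) \otimes \cdots \otimes \Ber(q[d])$ is a description of size $O(d)$. Summing the contributions, the total running time is dominated by $O(nd\log d)$ plus the $\poly(n,d)$ cost of the $\DPSGLearner$ call, which is bounded by a fixed polynomial $f(n,d)$ independent of $\eps,\alpha,\beta$. I would therefore take $f$ to be that polynomial and conclude.

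The only real subtlety — and the one place I would be careful to state explicitly rather than wave at — is that the runtime polynomial must not depend on $\eps,\alpha,\beta$, whereas the \emph{sample complexity} does (through $C_\alpha$, the $1/\eps$ and $1/\alpha$ factors in $m$, $m_0$, $m_1$). The resolution is that those parameters enter the algorithm's runtime only through the sizes of datasets it processes, and those sizes are all at most $n$ (the dataset is split, not enlarged); likewise $\DPSGLearner$'s runtime is $\poly(\text{number of samples}, \text{dimension}) = \poly(n,d)$ by Theorem~\ref{thm:sg-learner}, with no hidden dependence on $\eps,\alpha,\beta$ in the \emph{exponent} or coefficients of that polynomial. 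Modulo stating that cleanly, the proof is a routine accounting and I would not expect any genuine obstacle.
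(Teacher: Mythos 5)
Your proof is correct and follows essentially the same approach as the paper: a step-by-step accounting of the algorithm's runtime, observing that the partitioning loop runs $O(\log d)$ times, that the call to $\DPSGLearner$ is $\poly(n,d)$ by Theorem~\ref{thm:sg-learner}, and that everything else is cheap. If anything you are more careful than the paper, which asserts $O(d)$ work per partitioning iteration and thereby quietly elides the $O(md)$ cost of computing $\tmean_{B_r}(Y^r[S_r])$; your $O(nd)$ bound per iteration and your explicit remark that $\eps,\alpha,\beta$ only enter the runtime through dataset sizes already bounded by $n$ are both worth keeping.
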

\begin{proof}
    There are at most $\log_2(d)$ iterations in the
    partitioning rounds, and in each iteration, we
    perform $O(d)$ operations. The call to $\DPSGLearner$
    after that costs another fixed polynomial
    $g(n,d)$ time, where $g:\R\times\R\to\R$
    (Theorem~\ref{thm:sg-learner}). The final round
    has an $O(d)$ running time, as well. Therefore,
    we have an $f(n,d) = O(d\log(d)) + g(n,d) + O(d)$
    running time, which is polynomial in $n,d$.
    Note that we are assuming that the cost in the
    running time due to sampling from Laplace
    distribution is very low.
\end{proof}

\subsection{Accuracy Analysis}\label{sec:accuracy}

In this section we prove the following proposition bounding the sample complexity required by \PPDE{} to be accurate.

\begin{prop}\label{thm:ppde_acc}
    For every $d \in \N$, every product distribution
    $P$ over $\zo^{d}$, and every $\eps,\alpha, \beta > 0$,
    if $X = (X_1,\dots,X_n)$ are independent samples
    from $P$, where
    $$
        n \geq \wt{O}_{\alpha}\left(\frac{d\log^2(d/\beta)}{\alpha^2}
        + \frac{d\log^2(d/\eps\beta)}{\alpha\eps}\right),
    $$
    then with probability at least $1-O(\beta)$,
    $\PPDE_{\eps,\alpha,\beta}(X)$ outputs $Q$,
    such that $\SD(P,Q) \leq \alpha$. The notation $\wt{O}_{\alpha}(\cdot)$ hides polylogarithmic factors in $\tfrac{1}{\alpha}$.
\end{prop}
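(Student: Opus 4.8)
I would not bound $\SD(P,Q)$ directly, but instead bound $\sum_{j=1}^d\cs{P_j}{Q_j}$ and invoke sub-additivity of the $\CSD$-divergence (Lemma~\ref{lem:sd-ub}) together with Pinsker's inequality at the level of the full product distributions (Lemma~\ref{lem:pinsker}): $\SD(P,Q)\le\sqrt{\tfrac12\sum_j\cs{P_j}{Q_j}}$. This is exactly the step that avoids a $\sqrt d$ loss when converting per-coordinate errors into a guarantee for the whole distribution, and it is why $\CSD$-divergence (and not, say, an $\ell_1$ bound) is the right intermediate quantity. Writing $p_j,q_j$ for the means of $P_j,Q_j$, Lemma~\ref{lem:chi-ub} (and, where its hypotheses fail, the exact expression $(p_j-q_j)^2/(q_j(1-q_j))$) controls $\cs{P_j}{Q_j}$ in terms of $(p_j-q_j)^2$ and a lower bound on $q_j(1-q_j)$. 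Since the output coordinates partition as $[d]=T_P\sqcup S_F$, it suffices to show, on a good event of probability $1-O(\beta)$, that $\sum_{j\in T_P}\cs{P_j}{Q_j}=O(\alpha^2)$ and $\sum_{j\in S_F}\cs{P_j}{Q_j}=O(\alpha^2)$; running the algorithm with a constant-factor smaller accuracy parameter then gives $\SD(P,Q)\le\alpha$. The good event $E$ I would use is the conjunction of: (i) every empirical marginal $\widehat p_j^{(r)}=\tfrac1m\sum_i Y^r_i[j]$ is within $c(\sqrt{p_j\log(dR/\beta)/m}+\log(dR/\beta)/m)$ of $p_j$ (Bernstein, Lemma~\ref{lem:chernoff-add}, union over $[d]\times[R]$); (ii) no truncation ever occurs in $\tmean_{B_r}(Y^r[S_r])$ or $\tmean_{B_F}(Y^F[S_F])$, given the partition invariant below (Lemma~\ref{lem:chernoff-prod}, using that $Y^r$ is drawn independently of $S_r$); (iii) each Laplace noise coordinate is bounded by its scale times $\ln(dR/\beta)$ (Lemma~\ref{lem:lap-conc}); and (iv) $\DPSGLearner$ succeeds on $Z[T_P]$. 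A union bound gives $\Pr[E]\ge 1-O(\beta)$, and the sample sizes $m,m_1,m_0$ are chosen precisely so that on $E$ the empirical error plus Laplace noise on every coordinate of $S_r$ is at most $\tfrac18 u_r$ in each round $r$, the final step has $\ell_1$ error $O(\alpha)$, and $m_0$ meets the requirement of Theorem~\ref{thm:sg-learner} at accuracy $\tfrac{\alpha}{5}$, dimension $|T_P|\le d$, and radius $\sqrt d$.

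\textbf{The partition and the heavy coordinates.} On $E$, I would prove by induction on $r$ the invariant that $p_j\le u_r$ for all $j\in S_r$: the base case $S_1=[d]$, $u_1=\tfrac12$ is the hypothesis $\ex{}{P}\preceq\tfrac12$, and for the step, if $q_r[j]<\tau_r$ then (using $q_r[j]=\widehat p_j^{(r)}+z_r[j]$ on $E$) $p_j<\tau_r+|z_r[j]|+|\widehat p_j^{(r)}-p_j|\le\tfrac38 u_r+\tfrac18 u_r=u_{r+1}$, so moving $j$ to $S_{r+1}$ preserves the invariant. Running the same inequality in reverse shows $j\in T_r\Rightarrow p_j\ge\tau_r-\tfrac18 u_r=\tfrac14 u_r$, so every $j\in T_i$ satisfies $\tfrac14 u_i\le p_j\le u_i$. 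Hence, conditioned on the partition (which depends only on $Y$, not on $Z$), the rescaled $Z[T_P]$ is a product distribution whose $j$-th coordinate, for $j\in T_i$, is $\Ber(p_j)/\sqrt{u_i}$ --- bounded in $[0,u_i^{-1/2}]$ with variance $p_j(1-p_j)/u_i\le 1$ --- so its covariance is $\preceq\id$ (which is all Theorem~\ref{thm:sg-learner} needs, by the remark), and its mean $q^*$ satisfies $\norm{q^*[T_P]}_2^2=\sum_{j\in T_P}p_j^2/u_{i(j)}\le\sum_j p_j\le d/2\le d$. Applying $\DPSGLearner$ conditionally on the partition, and using that $m_0$ is large enough, yields $\widehat q[T_P]$ with $\norm{\widehat q[T_P]-q^*[T_P]}_2\le\tfrac{\alpha}{5}$.

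\textbf{From the learner to $\CSD$, and the final round.} After undoing the rescalings and clamping $q_j$ into a valid range bounded below by $u_{i(j)}/16$ (the final output $\Ber(q_j)$ requires clamping anyway, and since $p_j\ge u_{i(j)}/8$ this only helps), one checks that $|p_j-q_j|=O(\sqrt{u_{i(j)}}\,|q^*_j-\widehat q_j|)$ and $q_j(1-q_j)=\Omega(u_{i(j)})$, so that $\cs{P_j}{Q_j}=O((q^*_j-\widehat q_j)^2)$; summing over $j\in T_P$ gives $\sum_{j\in T_P}\cs{P_j}{Q_j}=O(\norm{\widehat q[T_P]-q^*[T_P]}_2^2)=O(\alpha^2)$. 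For the final round, the invariant and the loop's exit condition imply $\sum_{j\in S_F}p_j=O(1)$, so the truncation radius $B_F=4\log(m_1/\beta)$ suffices (Lemma~\ref{lem:chernoff-prod}, case $pt<1$), and $\norm{q[S_F]-p[S_F]}_1\le\norm{\widehat p[S_F]-p[S_F]}_1+\norm{z[S_F]}_1$: the first term is $O(\sqrt{d\log(d/\beta)/m_1})$ via Bernstein and $\sum_{j\in S_F}\sqrt{p_j}\le\sqrt{|S_F|\sum_j p_j}=O(\sqrt d)$, and the second is $O(|S_F|B_F/(m_1\eps))$ by concentration of a sum of Laplace variables; both are $O(\alpha)$ by the choice of $m_1$, so $\SD(P[S_F],Q[S_F])\le\norm{q[S_F]-p[S_F]}_1=O(\alpha)$. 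Combining the $\CSD$-bound on $T_P$ through Pinsker with the $\ell_1$-bound on $S_F$ gives $\SD(P,Q)=O(\alpha)$, hence $\le\alpha$ after the constant rescaling. A final bookkeeping check is that $mR+m_1+m_0=\wt{O}_{\alpha}(\tfrac{d\log^2(d/\beta)}{\alpha^2}+\tfrac{d\log^2(d/\eps\beta)}{\eps\alpha})$, so the hypothesized lower bound on $n$ provides enough samples for the initial three-way split.

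\textbf{Main obstacle.} The heart of the argument is the inductive analysis of the partitioning loop: the invariant $p_j\le u_r$ on $S_r$ must survive all $R=\Theta(\log d)$ rounds, which requires that in \emph{every} round the Laplace noise --- whose scale is tied to the truncation radius $B_r\approx u_r|S_r|\log(\cdot)$, and hence to the current number of live coordinates --- together with the empirical fluctuation stays below a fixed fraction of $u_r$; pinning down the relation between $m$, $B_r$, $\tau_r$, and the per-round failure probabilities, and checking it yields the advertised sample complexity, is the delicate part. The second, subtler point is the translation in the third paragraph: the sub-Gaussian learner only promises an aggregate $\ell_2$ error on the \emph{rescaled} mean, so a few coordinates may be estimated poorly relative to their (tiny) true magnitudes; showing these carry only $O(\alpha^2)$ worth of $\CSD$-divergence --- through the clamp and the fact that $\sum_j(\widehat q_j-q^*_j)^2$ is small --- is what makes passing through a $\CSD$ bound (rather than a lossy $\ell_1$ bound) actually work.
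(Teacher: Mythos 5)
Your proof is correct and follows essentially the same two-phase structure as the paper: an inductive invariant $p[j]\le u_r$ on $S_r$ maintained through the partitioning loop, a translation of the learner's aggregate $\ell_2$ error on the rescaled coordinates into a per-coordinate $\CSD$-bound on $T_P$, and a separate analysis of the final round, all combined via sub-additivity and Pinsker. The one genuine deviation is in the final round: the paper splits $S_F$ into ``heavy'' coordinates ($p[j]>1/d$), where it passes through Lemma~\ref{lem:chi-ub} and Pinsker, and ``light'' coordinates, where it uses a direct additive bound; you instead bound $\SD(P[S_F],Q[S_F])$ by $\|q[S_F]-p[S_F]\|_1$ and exploit the loop-exit condition $\sum_{j\in S_F}p[j]<1$ together with Cauchy--Schwarz ($\sum_j\sqrt{p_j}\le\sqrt{|S_F|\sum_j p_j}$) and the full Bernstein bound to show both the empirical and the Laplace contributions are $O(\alpha)$. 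This is a cleaner route that avoids the case split and works precisely because the residual mass on $S_F$ is $O(1)$, which is the same fact the paper leans on. You also make explicit two points the paper elides: that the output means $q_j$ must be clamped into a valid range bounded away from $0$ before $\dcs(P_j\|Q_j)\le 4(p_j-q_j)^2/q_j$ can be applied with $q_j$ in the denominator (the paper's displayed chain actually bounds $\sum_j 4(p_j-q_j)^2/p_j$ and silently swaps $p_j$ for $q_j$), and that the truncation and learner steps are valid because $Y^r$ and $Z$ are drawn independently of the (data-dependent) partition. Your constants differ slightly from the paper's (e.g.\ $p_j\ge\tfrac14 u_i$ vs.\ $\tfrac{15\tau_r}{17}=\tfrac{45}{136}u_r$ on $T_r$, and a passing $u_i/8$ vs.\ your own earlier $u_i/4$), but these are immaterial since you rescale $\alpha$ by a constant at the end.
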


\subsubsection{Analysis of the Partitioning Rounds}

In this section we analyze the progress made during the partitioning rounds.  We show two properties for any round $r$: (1) any coordinate $j$ such that $q_r[j]$ was filtered out during the partitioning rounds has a large mean, and (2) any coordinate $j$, such that $q_r[j]$ was moved on to the next round, has a small mean. We capture the properties of the partitioning rounds that will be necessary for the proof of Theorem~\ref{thm:ppde_acc} in the following lemma.

\begin{lem}[Partitioning Rounds]\label{lem:ppde_partitioning}
    If $Y^{1},\dots,Y^{R}$ each contain at least
    $$
    m \geq 2048 d \log(d/\beta)
        + \frac{2048 d \log(d/\eps\beta)}{\eps}
    $$
    i.i.d.\ samples from $P$, and $Z$ contains
    $$m_0 \geq \wt{O}_{\alpha}\left(\frac{d + \log(1/\beta)}{\alpha^2} + \frac{d + \log(1/\beta)}{\alpha\eps} + \frac{d\log(d)}{\eps}\right)$$
    i.i.d.\ samples from $P$ (where $\wt{O}_{\alpha}(\cdot)$ hides polylogarithmic factors in $\tfrac{1}{\alpha}$),
    then with probability at least $1-O(\beta)$,
    in every partitioning round $r \in [R]$, we have the following.
    \begin{enumerate}
        \item If a coordinate $j$ is filtered out in round $r$ (i.e., $q_r[j] \geq \tau_r$), then $p[j]$ is large: 
            $$
                p[j] \geq \tfrac{15\tau_r}{17}.
            $$
        \item If a coordinate $j$ is not filtered out in round $r$ (i.e., $q_r[j] < \tau_r$), then
            $p[j]$ is small:
            $$
                p[j] \leq u_{r+1} = \frac{u_{r}}{2}.
            $$
    \end{enumerate}
    Therefore, if $S_P \subseteq [d]$ is the set of all the coordinates estimated in the partitioning rounds, then with probability at least $1-O(\beta)$, $\SD(P[S_P],Q[S_P]) \leq \tfrac{\alpha}{2}$.
\end{lem}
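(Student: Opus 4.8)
The plan is to prove (1) and (2) by induction on the round $r$, carrying the invariant that every coordinate still active at the start of round $r$ has $p[j]\le u_r$, and then to obtain the total-variation bound on $S_P$ from the $\ell_2$ guarantee of $\DPSGLearner$ together with the rescaling. First I would condition on four high-probability events. \emph{(i) No truncation:} $\abs{Y^r_i[S_r]}\le B_r$ for all $i\in[m]$ and all rounds $r$; on the invariant every active coordinate has mean $\le u_r$ and the while-loop guarantees $u_r\abs{S_r}\ge 1$, so Lemma~\ref{lem:chernoff-prod}(1) (with $p=u_r$, $t=\abs{S_r}$) applies, and $B_r=3u_r\abs{S_r}\log(mR/\beta)$ dominates the norm bound there, so a union bound over the $mR$ rows gives this. \emph{(ii) Accurate empirical means:} on event (i), $\tmean_{B_r}(Y^r[S_r])[j]$ is the empirical mean of $m$ draws from $\Ber(p[j])$, so Bernstein's inequality (Lemma~\ref{lem:chernoff-add}) with a union bound over the $\le d$ coordinates of $\le R$ rounds controls $\abs{\tmean_{B_r}(Y^r[S_r])[j]-p[j]}$. \emph{(iii) Small Laplace noise:} $\abs{z_r[j]}\le\tfrac{B_r}{m\eps}\ln(dR/\beta)$ for all $r,j$, by Lemma~\ref{lem:lap-conc} (the noise being scaled to the $\ell_1$-sensitivity $B_r/m$ of $\tmean_{B_r}$ over $\eps$) and a union bound. \emph{(iv) $\DPSGLearner$ succeeds.} The settings of $m$ and $m_0$ make each of (i)--(iv) fail with probability $O(\beta)$ and make the total error satisfy $\abs{q_r[j]-p[j]}\le\tfrac{2\tau_r}{17}$ on (i)--(iii).

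Given (i)--(iii), both claims are immediate: if $q_r[j]\ge\tau_r$ then $p[j]\ge\tau_r-\tfrac{2\tau_r}{17}=\tfrac{15\tau_r}{17}$, and if $q_r[j]<\tau_r$ then $p[j]\le\tau_r+\tfrac{2\tau_r}{17}=\tfrac{19\tau_r}{17}\le\tfrac{u_r}{2}=u_{r+1}$, using $\tau_r=\tfrac38 u_r$. The second bound is precisely the induction step propagating the invariant to $S_{r+1}$ (the base case being $\ex{}{P}\preceq\tfrac12=u_1$ on $S_1=[d]$), so the invariant, and with it (1) and (2), holds in every round with probability $1-O(\beta)$.

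For the last assertion, feed the rescaled data to $\DPSGLearner$. By (1)--(2), every $j\in T_i$ has $p[j]\in[\tfrac{45}{136}u_i,\,u_i]$, so the product distribution of $Z[T_i]/\sqrt{u_i}$ on $T_P$ has mean $\mu$ with $\mu[j]=p[j]/\sqrt{u_i}$ and diagonal covariance with entries $\tfrac{p[j](1-p[j])}{u_i}\in[\tfrac{45}{272},1]$, so $\Sigma\preceq\id$; its coordinates are bounded (hence it is sub-Gaussian) and $\|\mu\|_2\le\sqrt{\sum_j p[j]}\le\sqrt{d/2}\le\sqrt d$. Thus Theorem~\ref{thm:sg-learner}, with the relaxation $\Sigma\preceq\id$ of the Remark and the budget $m_0$ (which is exactly its requirement at accuracy $\tfrac\alpha5$, dimension $\le d$, and norm bound $\sqrt d$), returns $\wh q[T_P]$ with $\|\wh q[T_P]-\mu\|_2\le\tfrac\alpha5$ with probability $1-\beta$. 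Since $\mu[j]$ is known to lie in the explicit interval $[\tfrac{45}{136}\sqrt{u_i},\sqrt{u_i}]$, I would project each $\wh q[j]$ onto that interval before forming $q[j]=\sqrt{u_i}\,\wh q[j]\in[\tfrac{45}{136}u_i,u_i]$; this post-processing (Lemma~\ref{lem:post-processing}) only decreases the error, so $e[j]:=\wh q[j]-\mu[j]$ still has $\|e\|_2\le\tfrac\alpha5$ while $q[j]\ge\tfrac{45}{136}u_i\ge\tfrac{45}{136}p[j]$. Then $(p[j]-q[j])^2=u_ie[j]^2$, and Lemma~\ref{lem:chi-ub} (whose hypotheses hold since $p[j]\le\tfrac12$ and $\abs{p[j]-q[j]}=\sqrt{u_i}\abs{e[j]}\le\tfrac14$ for $\alpha<1$) gives $\CS(\Ber(p[j])\|\Ber(q[j]))\le\tfrac{4u_ie[j]^2}{q[j]}\le\tfrac{544}{45}e[j]^2$. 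Summing via sub-additivity of $\CSD$-divergence (Lemma~\ref{lem:sd-ub}) and applying Pinsker's inequality (Lemma~\ref{lem:pinsker}), $\SD(P[S_P],Q[S_P])^2\le\tfrac12\sum_{j\in S_P}\CS(\Ber(p[j])\|\Ber(q[j]))\le\tfrac12\cdot\tfrac{544}{45}\cdot(\tfrac\alpha5)^2\le\tfrac{\alpha^2}{4}$, i.e.\ $\SD(P[S_P],Q[S_P])\le\tfrac\alpha2$.

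The step I expect to be the main obstacle is exactly this conversion from an $\ell_2$-accurate estimate of the rescaled mean to a total-variation-accurate estimate of $P[S_P]$. Because the rescaling factor $1/\sqrt{u_i}$ can be as large as $\sqrt{d/2}$, a coordinate from a deep round could a priori absorb essentially all of the $\tfrac\alpha5$ of rescaled $\ell_2$ error and emerge with $q[j]$ near $0$, which would make its $\CSD$-divergence term explode. The projection step above neutralizes this cleanly; but if one does not wish to modify the algorithm, one must instead split $S_P$ into $\mathcal G=\{j:q[j]\ge\tfrac12 p[j]\}$, handled through $\CSD$-divergence as above, and $\mathcal B=S_P\setminus\mathcal G$, and observe that any $j\in\mathcal B$ has $\abs{e[j]}\gg\sqrt{u_i}$, so $\sum_{j\in\mathcal B}u_i=O(\|e\|_2^2)$ and hence $\SD(P[\mathcal B],Q[\mathcal B])\le\sum_{j\in\mathcal B}\sqrt{u_i}\abs{e[j]}\le\sqrt{\sum_{j\in\mathcal B}u_i}\cdot\|e\|_2=O(\alpha^2)=O(\alpha)$. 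A second, more routine difficulty is verifying that the block size $m$ (whose privacy term is only $\tfrac{d\log(d/\eps\beta)}{\eps}$) keeps the Laplace error below $\tfrac{2\tau_r}{17}=\Theta(u_r)$ simultaneously over all $\Theta(\log d)$ rounds and all $\le d$ coordinates, given that $B_r/m$ scales with $\abs{S_r}$.
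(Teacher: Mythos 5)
Your inductive framework for claims (1) and (2) — the invariant $p[j]\le u_r$, the three conditioning events (no truncation, Bernstein-controlled sampling error, Laplace tail bound), and the union bound over rounds — matches the paper's proof. One inaccuracy: you assert the flat bound $\abs{q_r[j]-p[j]}\le\tfrac{2\tau_r}{17}$, but the bound that actually comes out of Bernstein plus the Laplace tail is the \emph{self-referential} inequality $\abs{q_r[j]-p[j]}\le\tfrac{p[j]}{16}+\tfrac{\tau_r}{16}$, which under the invariant $p[j]\le u_r=\tfrac{8\tau_r}{3}$ can be as large as $\tfrac{11\tau_r}{48}>\tfrac{2\tau_r}{17}$. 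Your downstream conclusions nonetheless coincide with the paper's because, exactly as the paper does, one rearranges $p[j]\ge\tau_r-\tfrac{p[j]}{16}-\tfrac{\tau_r}{16}$ to get $p[j]\ge\tfrac{15\tau_r}{17}$, and similarly for the upper bound; but the flat form of the error bound you wrote is not literally correct.

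On the $\ell_2$-to-TV conversion, you correctly identify the danger — a deep-round coordinate absorbing most of the rescaled $\ell_2$ error could drive $q[j]$ near zero — but then reach for a fix the paper does not need. You either modify the algorithm (projecting $\wh q[j]$ onto a known interval, which means you are no longer analyzing Algorithm~\ref{alg:ppde} as written) or split $S_P$ into good and bad sets. The paper's route is simpler: it computes the $\chi^2$-divergence in the direction that places $p[j]$, not $q[j]$, in the denominator. Concretely, applying Lemma~\ref{lem:chi-ub} with the roles of $P$ and $Q$ swapped gives $\CS\bigl(\Ber(q[j])\,\|\,\Ber(p[j])\bigr)\le\tfrac{4(q[j]-p[j])^2}{p[j]}$, and by Claim~\ref{clm:part-above-thresh} every $j\in T_i$ has $p[j]\ge\tfrac{15\tau_i}{17}$, so this denominator is bounded below in terms of $u_i$. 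Unwinding the rescaling $\|q[T_i]-p[T_i]\|_2^2 = u_i\,\|\wh q[T_i]-\wh p[T_i]\|_2^2$ and substituting $u_i=\tfrac{8\tau_i}{3}$ then converts the $\DPSGLearner$ guarantee $\|\wh q-\wh p\|_2\le\tfrac\alpha5$ directly into $\sum_{j\in T_P}\tfrac{4(q[j]-p[j])^2}{p[j]}\le\tfrac{\alpha^2}{2}$, and Pinsker (which, together with the symmetry of TV distance, lets one apply the reversed $\chi^2$-divergence) gives $\SD(P[T_P],Q[T_P])\le\tfrac\alpha2$. This is exactly the blowup concern you raised, but handled by choosing the direction of the divergence rather than by post-hoc projection or a case split. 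Your good/bad split is a valid alternative argument in principle, but it only gives $O(\alpha)$ rather than $\tfrac\alpha2$ without tightening constants, and it adds length that the divergence-direction trick makes unnecessary.
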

\begin{proof}
We will prove the lemma by induction on $r$.
Therefore, we will assume that
in every round $r$, $p[j] \leq u_{r}$ for
every $j \in S_{r}$ and prove that if this
bound holds, then the two claims in the
lemma hold. For the base of the induction,
observe that, by assumption, $p[j] \leq u_{1} = \frac12$
for every $j \in S_{1} = [d]$.  In what
follows we fix an arbitrary round $r \in [R]$.
Throughout the proof, we will use the notation
$\tilde{p}_{r} = \frac{1}{m} \sum_{i = 1}^{m} Y^{r}_{i}$
to denote the empirical mean of the $r$-th
block of samples.

\begin{clm}[Sampling Error in Partioning Rounds]\label{clm:part-samp-err}
    If $\tilde{p}_r[j] = \frac{1}{m} \sum_{i=1}^{m} Y^{r}_{i}[j]$ and
    $m \geq 1024d\log(dR/\beta)$, then
    with probability at least $1-\frac{2\beta}{R}$,
    $$
        \forall j \in S_{r}~~\left| p[j] - \tilde{p}_r[j] \right|
        \leq \sqrt{\frac{4p[j]
        \log\left( \frac{dR}{\beta} \right)}{m}}.
    $$
    Furthermore, if $p[j] \geq \frac{1}{d}$, then
    $$\left| p[j] - \tilde{p}_r[j] \right|
        \leq \frac{p[j]}{16}.$$
\end{clm}
\begin{proof}
    We use Lemma~\ref{lem:chernoff-add} and the facts that for all $\gamma>0$ and $0 < p \leq 1$,
    $$\KL(p+\gamma||p) \geq \frac{\gamma^2}
        {2(p+\gamma)}~~~\textrm{and}~~~\KL(p-\gamma||p)
        \geq \frac{\gamma^2}{2p},$$
    and set
    $$\gamma = \sqrt{\frac{4p[j] \log \left(\frac{dR}{\beta}\right)}{m}}.$$
    Note that when $p[j] \geq \frac{1}{d}$, due
    to our choice of parameters, $\gamma \leq \tfrac{p[j]}{16}$.
    Therefore, $2(p[j]+\gamma) \leq 4p[j]$.
    Finally, taking a union bound over the cases
    when $\tilde{p}_r[j] \leq p[j] - \gamma$ and when
    $\tilde{p}_r[j] \geq p[j] + \gamma$, we
    prove the claim.
\end{proof}

\begin{clm}[No Truncation in Partioning Rounds]\label{clm:part-no-trunc}
    In round $r$, with probability at least $1-\frac{\beta}{R}$,
    for every $Y^{r}_{i} \in Y^{r}$, we have that
    $\abs{Y^{r}_{i}} \leq B_{r}$.
    So, no rows of $Y^{r}$ are truncated while
    computing $\tmean_{B_{r}}(Y^{r})$.
\end{clm}
\begin{proof}
    By assumption, all marginals specified
    by $S_r$ are upper bounded by $u_r$. Now,
    the expected value of $\abs{X^r_i}$ is
    at most $u_r|S_r|$. Since
    $B_r = 3 u_r |S_r| \log(mR/\beta)$,
    we know that $B_r \geq u_r |S_r| \left(1+2\log(mR/\beta)\right)$.
    The claim now follows from Lemma~\ref{lem:chernoff-prod}
    and a union bound over the rows of $Y^r$.
\end{proof}
    
\begin{clm}[Error due to Privacy in Partioning Rounds]\label{clm:part-priv-err}
With probability at least $1-\frac{2\beta}{R}$,
$$
    \forall j \in S_{r}~~\left| \tilde{p}_r[j] - q_{r}[j] \right| \leq
	\frac{3 u_r |S_r|
	\log\left(\frac{mR}{\beta}\right)
        \log\left(\frac{dR}{\beta}\right)}{\eps m}.
$$
\end{clm}
\begin{proof}
    We assume that all marginals specified
    by $S_r$ are upper bounded by $u_r$. From
    Claim~\ref{clm:part-no-trunc}, we know that,
    with probability at least $1-\beta/R$, there is no truncation,
    so $\tmean_{B_{r}}(Y^r[S_r]) = \frac{1}{m}
    \sum_{Y_i^{r} \in Y^{r}} Y^{r}[S_r] = \tilde{p}_r[S_{r}]$.
    So, the Laplace noise
    is added to $\tilde{p}_r[j]$ for each $j \in S_r$.
    Therefore, the only source of error here is the
    Laplace noise. Using the standard tail bound
    for Laplace distributions (Lemma~\ref{lem:lap-conc})
    after setting,
    $$
    t = \frac{3 u_r |S_r|
	\log\left(\frac{mR}{\beta}\right)}{\eps m},$$
    and taking a union bound over all
    coordinates in $S_r$, and the event
    of truncation, we obtain the claim.
\end{proof}

Applying the triangle inequality and simplifying, using our choice of $m$, and noting that in our algorithm, $\tau_r = \tfrac{3u_r}{8}$,
we get that in each round $r$, with probability at least $1-\tfrac{4\beta}{R}$,
\begin{align}
\abs{p[j] - q_r[j]} &\leq \sqrt{\frac{4p[j]
            \log\left( \frac{dR}{\beta} \right)}{m}} +
            \frac{3 u_r |S_r|
	   \log\left(\frac{mR}{\beta}\right)
            \log\left(\frac{dR}{\beta}\right)}{\eps m}\nonumber\\
        &\leq \frac{p[j]}{16} + \frac{3u_r}{128}\label{eq:part-err1}\\
        &= \frac{p[j]}{16} + \frac{\tau_r}{16}.\label{eq:part-err2} 
\end{align}
To simplify our calculations, we will define $e_{r,j}$ to be the quantity on the right-hand-side of Inequality~\ref{eq:part-err2}.

\begin{clm}[Noisy Estimates above Threshold]\label{clm:part-above-thresh}
    With probability at least
    $1-\frac{4\beta}{R}$,
    for every $j \in S_{r}$,
    $$q_{r}[j] \geq \tau_{r} \implies
        p[j] \geq \frac{15\tau_r}{17}.$$
\end{clm}
\begin{proof}
    We know that $\abs{p_j-q_r[j]} \leq e_{r,j}$ with
    high probability, which implies that $p[j] \geq q_r[j]-e_{r,j}$.
    Now, given that $q_r[j] \geq \tau_r$ and the bound on $e_{r,j}$ from Inequality~\ref{eq:part-err2}, we have the following.
    $$p[j] \geq q_r[j] - e_{r,j} \geq \tau_r - \frac{p[j]}{16} - \frac{\tau_r}{16} \implies p[j] \geq \frac{15\tau_r}{17}$$
    This completes the proof.
\end{proof}

\begin{clm}[Noisy Estimates below Threshold]\label{clm:part-below-thresh}
    With probability at least $1-\frac{4\beta}{R}$,
    for every $j \in S_{r}$,
    $$
    q_{r}[j] < \tau_{r} \Longrightarrow p[j] \leq u_{r+1} = \frac{u_{r}}{2}.
    $$
\end{clm}
\begin{proof}
    We know that with high probability,
    $p_j \leq q_r[j] + e_{r,j}$. But since
    $q_r[j] < \tau_j$, we know that
    $p[j] < \tau_r + e_{r,j}$. Also,
    we set $\tau_r = \frac{3}{4}u_{r+1}$ in our algorithm.
    Using the bound on $e_{r,j}$ from Inequality~\ref{eq:part-err1}, this gives us the following.
    $$p[j] < \tau_r + e_{r,j} \leq \frac{3u_{r+1}}{4} + \frac{p[j]}{16} + \frac{3u_r}{128} = \frac{3u_{r+1}}{4} + \frac{p[j]}{16} + \frac{3u_{r+1}}{256} \implies p[j] \leq \frac{13u_{r+1}}{16} < u_{r+1} = \frac{u_r}{2}$$
    Our proof is complete.
\end{proof}

Claim~\ref{clm:part-below-thresh} completes the
inductive step of the proof. It establishes
that at the beginning of round $r+1$, $p_j \leq u_{r+1}$
for all $j \in S_{r+1}$.

Finally, we analyse the accuracy of the coordinates
collected in $T_P$ over all the partitioning rounds
by the call to $\DPSGLearner$.
We can bound the total-variation distance between $P[S_P]$
and $Q[S_P]$ by computing the $\chi^2$ divergence between
the two. We make two key observations here.
\begin{itemize}
    \item In round $r$, we scale $Z[T_r]$ by $\tfrac{1}{\sqrt{u_r}}$, which means that the (scaled product) distribution of that subset of the coordinates (denoted by $\wh{P}[T_r]$) is over $\left\{0,\tfrac{1}{\sqrt{u_r}}\right\}$. Let $\Sigma_{T_r}$ be its covariance matrix. Then we can observe that the eigenvalues of $\Sigma_{T_r}$ lie between $\tfrac{\tau_r(1-\tau_r)}{u_r} = \tfrac{3(1-\frac{3u_r}{8})}{8} \geq \tfrac{39}{64}$ and $\tfrac{u_r(1-u_r)}{u_r} = 1-u_r \leq 1$. Since this is true for any $r$, this must be true for the entire $\wh{P}[T_P]$, as well.
    \item Let the output product distribution from invoking $\DPSGLearner$ on $Z[T_P]$ be $\wh{Q}[T_P]$ (which has mean $\wh{q}[T_P]$). Suppose the covariance matrix of $\wh{P}[T_P]$ is $\Sigma_{P_P}$, and the value of $r$ before the update in the end of the final iteration of the \textbf{While}-loop was $r^*$. Then from Theorem~\ref{thm:sg-learner} (accuracy guarantees of $\DPSGLearner$) and our setting of the accuracy parameters in the call to $\DPSGLearner$ in Algorithm~\ref{alg:ppde}, with probability at least $1-\beta$, the squared $\ell_2$ distance between $\wh{P}[T_P]$ and $\wh{Q}[T_P]$ is given by,
    \begin{align*}
        \frac{\alpha^2}{25} &\geq \|\wh{q}[T_P]-\wh{p}[T_P]\|_2^2\\
            &= \sum\limits_{i \in [r^*]}{\|\wh{q}[T_i]-\wh{p}[T_i]\|_2^2}\\
            &= \sum\limits_{i \in [r^*]}{\frac{\|q[T_i]-p[T_i]\|_2^2}{u_i}}\\
            &= \sum\limits_{i \in [r^*]}{\frac{3\|q[T_i]-p[T_i]\|_2^2}{8\tau_i}}\tag{$\tau_i = \tfrac{3u_i}{8}$}\\
            &\geq \sum\limits_{i \in [r^*]}{\sum\limits_{j \in T_i}{\frac{45\cdot(q[j]-p[j])^2}{136\cdot p[j]}}}\tag{Claim~\ref{clm:part-above-thresh}}\\
            &= \frac{45}{544}\cdot\sum\limits_{j \in T_P}{\frac{4(q[j]-p[j])^2}{p[j]}}.
    \end{align*}
    On rearranging the above, this implies that,
    $$\sum\limits_{j \in T_P}{\frac{4(q[j]-p[j])^2}{p[j]}} \leq \frac{\alpha^2}{2}.$$
\end{itemize}
The above, on combining with Lemmata~\ref{lem:chi-ub}, \ref{lem:sd-ub}, and~\ref{lem:pinsker}, implies that $\dtv(P[T_P],Q[T_P]) \leq \tfrac{\alpha}{2}$.

Now, we can take the union bound over all the failure events in all the rounds and over the failure of $\DPSGLearner$, so that the conclusions of Lemma~\ref{lem:ppde_partitioning} hold with probability $1-5\beta$.
This completes the proof.
\end{proof}

\subsubsection{Analysis of the Final Round}

In this section we show that the TV error of the coordinates $j$, such that $q[j]$ was set in the final round, is small.

\begin{lem}[Final Round]\label{lem:ppde_final}
    In the final round, let $k \in [R+1]$ for which
    $u_k |S_k| < 1$. If $p[j] \leq u_k$
    for every $j \in S_{F}$, and $Y^{F}$
    contains at least
    $$
    m_1 =  \frac{128 d \log (d/\beta )}{\alpha^2} +
        \frac{256 d \log(d/\eps\alpha\beta)}{\alpha\eps}
    $$
    i.i.d.\ samples from $P$, then with
    probability at least $1-O(\beta)$,
    then $\SD(P[S_F], Q[S_F]) \leq \tfrac{\alpha}{2}$.
\end{lem}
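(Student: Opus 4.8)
The target is to show that, with probability $1-O(\beta)$, $\sum_{j\in S_F}\abs{p[j]-q[j]}\le\tfrac{\alpha}{2}$; since $\SD(\Ber(a),\Ber(b))=\abs{a-b}$, total-variation distance is sub-additive over product distributions (Lemma~\ref{lem:sd-ub}), and clipping each $q[j]$ into $[0,1]$ (which contains $p[j]$) only shrinks $\abs{p[j]-q[j]}$, this gives $\SD(P[S_F],Q[S_F])\le\sum_{j\in S_F}\abs{p[j]-q[j]}\le\tfrac{\alpha}{2}$. Writing $\tilde{p}^{F}$ for the empirical mean of $Y^{F}$, the algorithm sets $q[S_F]=\tmean_{B_F}(Y^{F}[S_F])+z$ with $z\sim\Lap(B_F/(\eps m_1))^{\otimes\abs{S_F}}$, so $\abs{p[j]-q[j]}\le\abs{p[j]-\tilde{p}^{F}[j]}+\abs{z[j]}$ as soon as no truncation occurred, and it suffices to control (i) the truncation event, (ii) $\sum_{j\in S_F}\abs{p[j]-\tilde{p}^{F}[j]}$, and (iii) $\sum_{j\in S_F}\abs{z[j]}$, bounding the last two each by $\tfrac{\alpha}{4}$ and the first by $O(\beta)$.

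\emph{No truncation, and the sampling error.}
By the hypotheses, $p[j]\le u_k$ for $j\in S_F$ and $u_k\abs{S_F}=u_k\abs{S_k}<1$, hence $\ex{}{\abs{Y^{F}_{i}[S_F]}}\le u_k\abs{S_F}<1$; by part~2 of Lemma~\ref{lem:chernoff-prod} (with $t=\abs{S_F}$ and $m=m_1$) and a union bound over the $m_1$ rows, with probability $\ge1-\beta$ every row has $\abs{Y^{F}_{i}}<4\log(m_1/\beta)=B_F$, so no row is truncated and $\tmean_{B_F}(Y^{F}[S_F])=\tilde{p}^{F}[S_F]$. For the sampling error I would argue coordinate-wise exactly as in Claim~\ref{clm:part-samp-err}: the additive Chernoff bound (Lemma~\ref{lem:chernoff-add}) for coordinates with $p[j]\gtrsim\log(d/\beta)/m_1$ and the multiplicative one (Lemma~\ref{lem:chernoff-mult}) for the smaller ones, giving, with probability $\ge1-O(\beta)$, that $\abs{p[j]-\tilde{p}^{F}[j]}\le\sqrt{4p[j]\log(d/\beta)/m_1}+O\!\left(\log(d/\beta)/m_1\right)$ for every $j\in S_F$. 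Summing and applying Cauchy--Schwarz together with the crucial bound $\sum_{j\in S_F}p[j]\le u_k\abs{S_F}<1$,
\[
  \sum_{j\in S_F}\abs{p[j]-\tilde{p}^{F}[j]}\ \le\ \sqrt{\frac{4\log(d/\beta)}{m_1}}\,\sqrt{\abs{S_F}\sum_{j\in S_F}p[j]}\;+\;O\!\left(\frac{d\log(d/\beta)}{m_1}\right)\ \le\ \sqrt{\frac{4d\log(d/\beta)}{m_1}}+O\!\left(\frac{d\log(d/\beta)}{m_1}\right),
\]
which is at most $\tfrac{\alpha}{4}$ by the choice $m_1\ge 128 d\log(d/\beta)/\alpha^2$.

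\emph{The privacy noise.}
The $\abs{z[j]}$ are i.i.d., each the absolute value of a $\Lap(s)$ variable with $s:=B_F/(\eps m_1)=4\log(m_1/\beta)/(\eps m_1)$; using a Bernstein-type tail bound for the sum of $\abs{S_F}$ such sub-exponential variables (or, when $\abs{S_F}$ is small, a union bound over the coordinates via Lemma~\ref{lem:lap-conc}), with probability $\ge1-\beta$,
\[
  \sum_{j\in S_F}\abs{z[j]}\ \le\ O\!\left(\abs{S_F}\,s+s\log(1/\beta)\right)\ \le\ O\!\left(\frac{d\log(m_1/\beta)}{\eps m_1}\right).
\]
Since $m_1=\poly(d,1/\eps,1/\alpha,1/\beta)$ we have $\log(m_1/\beta)=O(\log(d/\eps\alpha\beta))$, so this is at most $\tfrac{\alpha}{4}$ by the choice $m_1\ge 256 d\log(d/\eps\alpha\beta)/(\eps\alpha)$. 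A final union bound over the three failure events then yields $\sum_{j\in S_F}\abs{p[j]-q[j]}\le\tfrac{\alpha}{4}+\tfrac{\alpha}{4}=\tfrac{\alpha}{2}$ with probability $1-O(\beta)$.

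\emph{Where the work is.}
The main obstacle is step~(iii): the mechanism perturbs \emph{up to $d$} coordinates, yet the \emph{aggregate} $\ell_1$ perturbation must stay $O(\alpha)$ while we spend only $\wt{O}(d/(\eps\alpha))$ samples. This is feasible only because the truncation radius $B_F$, and hence the $\ell_1$-sensitivity $B_F/m_1$ of $\tmean_{B_F}$, is merely $O(\log(m_1/\beta))$ --- not the $\Theta(d/m_1)$ sensitivity of the raw empirical mean --- which itself rests on the structural fact inherited from Lemma~\ref{lem:ppde_partitioning} that $u_k\abs{S_F}<1$, i.e., the mean-mass left for the final round is below $1$ (the same fact is what makes the Cauchy--Schwarz step in~(ii) produce $\sqrt{d/m_1}$ rather than $\sqrt{d/m_1}\cdot\sqrt{\abs{S_F}}$). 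Secondary nuisances are that the additive Chernoff bound degenerates for coordinates with $p[j]\lesssim\log(d/\beta)/m_1$ (handled by the multiplicative bound there), and that avoiding an extra logarithmic factor in~(iii) needs the sum-concentration bound for the Laplace noise instead of a per-coordinate union bound.
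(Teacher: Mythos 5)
Your proposal is correct, and it takes a genuinely different and in one respect tighter route than the paper. The paper first splits $S_F$ into the heavy coordinates $H=\{j:p[j]>1/d\}$ and the light ones $L=S_F\setminus H$; on $H$ it converts the per-coordinate sampling error into a per-coordinate $\chi^2$-divergence bound (Lemma~\ref{lem:chi-ub}), sums these via Lemma~\ref{lem:sd-ub}, and converts to TV via Pinsker (Lemma~\ref{lem:pinsker}); on $L$ it argues coordinate-by-coordinate in $\ell_1$ directly. You instead work in $\ell_1$ throughout and bound $\sum_{j\in S_F}\abs{p[j]-q[j]}$ in one pass, applying Cauchy--Schwarz to $\sum_j\sqrt{p[j]}$ together with the structural fact $\sum_{j\in S_F}p[j]\le u_k\abs{S_F}<1$. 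These two routes are morally the same thing --- your Cauchy--Schwarz step, squared, is exactly the sub-additivity of $\chi^2$ followed by Pinsker, and both rest on the same structural mass bound --- but yours is tidier: it avoids the $H/L$ split entirely, and it does not require Lemma~\ref{lem:chi-ub} (whose hypothesis $\abs{p-q}\le\tfrac14$ one would otherwise have to check).

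The more substantive divergence is in how you control the Laplace noise. The paper union-bounds Lemma~\ref{lem:lap-conc} over all coordinates of $S_F$ (Claim~\ref{clm:final-priv-err}), which yields a per-coordinate error of order $s\log(d/\beta)$ with $s=B_F/(\eps m_1)$ and therefore an aggregate of order $d\,s\log(d/\beta)$; with the stated $m_1$ this carries one extra $\log(d/\eps\alpha\beta)$ factor relative to what is needed to get $\tfrac{\alpha}{8d}$ per coordinate. You instead invoke Bernstein-type concentration for the sum of $\abs{S_F}$ i.i.d.\ $\abs{\Lap(s)}$ (i.e., exponential) variables, giving $\sum_j\abs{z[j]}\le O(\abs{S_F}s+s\log(1/\beta))$ without the $\log(d)$ from a union bound, which makes the stated $m_1$ suffice cleanly. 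This is a genuine improvement, but note that no such sub-exponential sum bound appears among the paper's preliminaries; if you want to carry this route out you should state and cite (or prove) the one-sided Bernstein tail for sums of exponential random variables, since Lemma~\ref{lem:lap-conc} alone does not give it. Aside from that one unfilled citation, the argument is sound, and the observation that clipping $q[j]$ into $[0,1]$ only helps is a welcome detail the paper leaves implicit.
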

\begin{proof}
    Again, we use the notation,
    $\tilde{p} = \frac{1}{m_1} \sum_{i = 1}^{m_1} Y^{F}_{i}$,
    for the rest of this proof. First, we have
    two claims that bound the difference
    between $p[j]$ and $\tilde{p}[j]$.
    
    \begin{clm}[Sampling Error for Large Coordinates in Final Round]\label{clm:final-samp-err-lrg}
        For each $j \in S_F$, such that $p_j > \frac{1}{d}$,
        with probability at least $1-2\beta/d$,
        we have,
        $$
        \left| p[j] - \tilde{p}[j] \right|
            \leq \sqrt{\frac{4p[j] \log\left( \frac{d}{\beta} \right)}{m_1}}.
        $$
    \end{clm}
    \begin{proof}
        The proof is identical to that of Claim~\ref{clm:part-samp-err}.
    \end{proof}
    
    \begin{clm}[Sampling Error for Small Coordinates in Final Round]\label{clm:final-samp-err-sml}
        For each $j \in S_F$, such that $p_j \leq \frac{1}{d}$,
        with probability at least $1-2\beta/d$,
        we have,
        $$
        \left| p[j] - \tilde{p}[j] \right|
        	\leq \frac{\alpha}{8d}.
        $$
    \end{clm}
    \begin{proof}
        We use Lemma~\ref{lem:chernoff-add} and
        facts that 
        $$\forall \gamma > 0~~~\KL(p+\gamma||p) \geq \frac{\gamma^2}
        {2(p+\gamma)}~~~\textrm{and}~~~\KL(p-\gamma||p)
        \geq \frac{\gamma^2}{2p}.$$
        Set $\gamma = \tfrac{\alpha}{8d}$. Then from
        Lemma~\ref{lem:chernoff-add} and our choice of $m_1$, we have the following.
        \begin{itemize}
            \item For all $j \in S_F$, with probability at least $1-\tfrac{\beta}{d}$, $\tilde{p}[j] \leq p[j] + \gamma$.
            \item For all $j \in S_F$, with probability at least $1-\tfrac{\beta}{d}$, $\tilde{p}[j] \geq \max\{0,p[j]-\gamma\}$.
        \end{itemize}
        Applying the union bound, we get the required result.
    \end{proof}

    \begin{clm}[No Truncation in Final Round]\label{clm:final-no-trunc}
        With probability at least $1-\beta$,
        for every $Y^{F}_{i}
        \in Y^{F}$, $\abs{Y^{F}_{i}} \leq B_{F}$,
        so no rows of $Y^{F}$ are truncated in the
        computation of $\tmean_{B_{F}}(Y^{F})$.
    \end{clm}
    \begin{proof}
        Note that all the marginals specified
        by $S_F$ are upper bounded by $u_k$ (where
        $k$ is the index as specified in
        Lemma~\ref{lem:ppde_final}) and that $u_k|S_F| < 1$.
        With this, we use Lemma~\ref{lem:chernoff-prod}
        and get the required result because we set
        the truncation radius $B_F = 4\log(m_1/\beta)$.
    \end{proof}
    
    \begin{clm}[Error due to Privacy in Final Round]\label{clm:final-priv-err}
    With probability at least $1-2\beta$,
    $$
    \forall j \in S_{r}~~\left| \tilde{p}[j] - q_{r}[j] \right| \leq
    \frac{4\log\left(\frac{m_1}{\beta}\right)\log\left(\frac{d}{\beta}\right)}{\eps m_1}.
    $$
    \end{clm} 
    \begin{proof}
        Using the standard tail bound for Laplace random
        variables (Lemma~\ref{lem:lap-conc}) with the
        following parameters,
        $$
            t = \frac{4\log\left(\frac{m_1}{\beta}\right)}{\eps m_1},
        $$
        and taking the union bound over all the
        columns of the dataset in that round and
        the event of truncation, we obtain the claim.
    \end{proof}

    Note that because no truncation happens in this round,
    the sampling error without the Laplace noise is
    bounded by the quantity as specified in
    Claims~\ref{clm:final-samp-err-lrg}
    and~\ref{clm:final-samp-err-sml}. Because of the
    possible difference in magnitudes of the means
    of the marginals in $S_F$, we establish their error
    guarantees separately. Let $H \subset S_F$ be the
    set of all coordinates, whose means are greater
    than $\tfrac{1}{d}$, i.e.,
    $H \coloneqq \{j \in S_F: p[j] > 1/d\}$.
    Likewise, let $L \subseteq S_F$ be the set of lighter
    coordinates, i.e., $L \coloneqq S_F \setminus H$.

    We analyse the heavier coordinates in $H$ first. Let
    $\wt{P}[H]$ be the product distribution over the
    coordinates in $H$ with mean $\tilde{p}[H]$.
    For all $j \in H$, using Claim~\ref{clm:final-samp-err-lrg}
    and our choice of $m_1$, we know that
    $$\abs{\tilde{p}[j]-p[j]} \leq \sqrt{\frac{4p[j] \log\left( \frac{d}{\beta} \right)}{m_1}} \implies \frac{4(p[j]-\tilde{p}[j])^2}{p[j]} \leq \frac{16\log\left(\frac{d}{\beta}\right)}{m_1} \leq \frac{\alpha^2}{32d},$$
    which (by Lemma~\ref{lem:chi-ub}) means that
    $\dcs(\wt{P}[j],P[j]) \leq \tfrac{\alpha^2}{32}$.
    Combined with Lemma~\ref{lem:pinsker}, this implies
    that $\dtv(P[j],\wt{P}[j]) \leq \tfrac{\alpha}{8d}$.
    Now, from Claim~\ref{clm:final-priv-err} and our choice
    of $m_1$, we know that for all $j \in H$,
    $$\abs{\tilde{p}[j]-q[j]} \leq \frac{4\log\left(\frac{m_1}{\beta}\right)\log\left(\frac{d}{\beta}\right)}{\eps m_1} \leq \frac{\alpha}{8d},$$
    which implies that
    $\dtv(\wt{P}[j],Q[j]) \leq \tfrac{\alpha}{8d}$.
    Therefore, by triangle inequality,
    $\dtv(P[j],Q[j]) \leq \tfrac{\alpha}{4d}$. Finally,
    from Lemma~\ref{lem:sd-ub}, we have that
    $\dtv(P[H],Q[H]) \leq \tfrac{\alpha}{4}$.

    Next, we bound the error on the lighter coordinates
    in $L$. Claims~\ref{clm:final-samp-err-sml}
    and~\ref{clm:final-priv-err}, the triangle inequality,
    and our choice of $m_1$ show that for all $j \in L$,
    \begin{align*}
        \abs{p[j]-q[j]} &\leq \frac{\alpha}{8d} + \frac{4\log\left(\frac{m_1}{\beta}\right)\log\left(\frac{d}{\beta}\right)}{\eps m_1}\\
            &\leq \frac{\alpha}{8d} + \frac{\alpha}{8d}\\
            &= \frac{\alpha}{4d}.
    \end{align*}
    This implies that for all $j \in L$,
    $\dtv(P[j],Q[j]) \leq \tfrac{\alpha}{4d}$.
    Therefore, from Lemma~\ref{lem:sd-ub},
    $\dtv(P[L],Q[L]) \leq \tfrac{\alpha}{4}$.

    Finally, through an application of Lemma~\ref{lem:sd-ub}
    again, we obtain that
    $$\dtv(P[S_F],Q[S_F]) \leq \dtv(P[H],Q[H]) + \dtv(P[L],Q[L]) \leq \frac{\alpha}{2}.$$
    This completes our proof.
\end{proof}

\subsubsection{Putting It All Together}

In this section, we combine Lemmata~\ref{lem:ppde_partitioning} and~\ref{lem:ppde_final} to prove Proposition~\ref{thm:ppde_acc}. First, by Lemma~\ref{lem:ppde_partitioning}, with probability at least $1-O(\beta)$, if $S_{P}$ is the set of coordinates $j$, such that $q[j]$ was set in any of the partitioning rounds, then, 
\begin{enumerate}
\item $\SD(P[S_{P}], Q[S_{P}]) \leq \tfrac{\alpha}{2}$ and
\item if $j \not\in S_{F}$ and $k$ is the index of the final round, then $p[j] \leq u_{k}$ and $u_{k}\abs{S_F} < 1$.
\end{enumerate}
Next, due to the second consequence listed above, we can apply Lemma~\ref{lem:ppde_final} to obtain that if $S_{F}$ consists of all coordinates set in the final round, then with probability at least $1-O(\beta)$, $\SD(P[S_{F}], Q[S_{F}]) \leq \tfrac{\alpha}{2}$. Finally, we use the union bound and Lemma~\ref{lem:sd-ub} to conclude that, with probability at least $1-O(\beta)$,
$$
    \SD(P,Q) \leq \SD(P[S_{P}], Q[S_{P}]) + \SD(P[S_{F}], Q[S_{F}]) \leq \alpha.
$$
This completes the proof of Proposition~\ref{thm:ppde_acc}.

\section{Discussion}

In this work, we solved a fundamental statistical problem of estimating the means of binary product distributions in total-variation distance under pure DP with optimal sample complexity and under polynomial running time.

However, we would like to mention again that our techniques hold similarities with those in \cite{KamathLSU19}. That said, we note that private preconditioning steps are commonly seen in DP statistics now, especially when trickier, direction-wise metrics, such as total-variation distance, are involved. The goal in these cases is to have direction-wise accuracy, so the choice of the error metric is crucial. That said, the way this preconditioning is done could depend on the family of distributions in question according to the way total-variation distance is characterised for those distributions and on other factors, such as concentration properties and domain. For example, in \cite{KamathLSU19} and \cite{KamathMSSU22}, such steps were performed for estimation of covariances of Gaussians – the idea was to make all the directions of the Gaussian similar to one another privately, and then estimate them accurately, before reverting the transformation. There, the preconditioning looked different from what we did here, but the high-level goal was still the same – adding appropriate amounts of noise in all directions. The problem being solved in our work is also under total-variation distance, which is why we went back to a preconditioning-style algorithm. Is there a different (but a more direct) approach to solving this problem efficiently under pure DP that did not involve any private preconditioning? We do not know the answer yet, but it is an interesting question to think about.

Now, the steps in our algorithm were motivated by the current tools available to solve this problem. We realised that the sub-Gaussian mean estimator from \cite{HopkinsKMN23} could not be directly applied to our problem, otherwise the problem would become quite trivial to solve. Since estimating in total-variation distance requires direction-wise accuracy, we had to adapt the preconditioning approach in \cite{KamathLSU19} for the estimator from \cite{HopkinsKMN23} to give us anything useful. This also led to requiring different technical lemmata at different stages to prove the accuracy guarantees of our algorithm, thereby creating important and non-trivial differences in our analyses from those in \cite{KamathLSU19}.

We also admit that the recent development of \cite{HopkinsKMN23} was an important factor in our work. Before that, there was no efficient, pure DP method to estimate the means of sub-Gaussians. However, as we mentioned above, this was not enough by itself because it can only give an estimate that is accurate to within $\ell_2$ distance, so our preconditioning approach seemed necessary to us if we were to use the algorithm from \cite{HopkinsKMN23} as a black box.

\section*{Acknowledgements}

We would like to thank Gautam Kamath for their helpful discussion on this problem.

\printbibliography


\end{document}